\documentclass[a4paper,USenglish,cleveref,thm-restate]{lipics-v2021}
\hideLIPIcs
\nolinenumbers
\usepackage[textsize=footnotesize]{todonotes}
\usepackage{amsmath,amssymb,amsthm}
\usepackage{booktabs}
\usepackage{microtype}
\usepackage{tikz}
\usepackage{csquotes}
\usepackage{tcolorbox}

\usepackage{mathtools}
\DeclarePairedDelimiter{\ceil}{\lceil}{\rceil}
\DeclarePairedDelimiter{\floor}{\lfloor}{\rfloor}

\usetikzlibrary{arrows.meta,calc,positioning}

\usepackage{bm}

\usepackage[bibliography=common]{apxproof}

\newcommand{\problemdec}[3]{
	\begin{tcolorbox}[width = \textwidth,colback=white,arc=0pt,outer arc=0pt,boxrule=0.7pt,left =0.5em,right=0em]#1		\\[2pt]
		\begin{tabular}{ @{}l p{0.84\textwidth} c }
			\textbf{Input:} & #2 \\[.5pt]
			\textbf{Task:} & #3
		\end{tabular}
	\vspace{-0.25em}
	\end{tcolorbox}
}

\DeclareMathOperator{\tw}{tw}
\DeclareMathOperator{\pw}{pw}
\DeclareMathOperator{\mg}{mg}
\DeclareMathOperator{\rad}{rad}
\DeclareMathOperator{\dist}{dist}
\DeclareMathOperator{\colors}{Colors}
\newtheorem{openproblem}[theorem]{Open Problem}

\title{Polynomial pathwidth for graphs vertex-coverable by few shortest paths and algorithms for quasi-shortest path covers}
\titlerunning{Polynomial pathwidth and algorithms for graphs coverable by few shortest paths}

\author{Li-Hsuan Chen}
{Department of Computer Science and Information Engineering, National Chung
 Cheng University, Chiayi, Taiwan}
{lhchen@cs.ccu.edu.tw}{https://orcid.org/0009-0007-6528-7064}{}
\author{Florent Foucaud}{Université Clermont Auvergne, CNRS, Mines Saint-Étienne, Clermont Auvergne INP, LIMOS, 63000 Clermont-Ferrand, France\and \url{https://perso.limos.fr/ffoucaud}}{florent.foucaud@uca.fr}{https://orcid.org/0000-0001-8198-693X}{ANR project GRALMECO (ANR-21-CE48-0004), French government IDEX-ISITE initiative 16-IDEX-0001 (CAP 20-25), International Research Center ``Innovation Transportation and Production Systems'' of the I-SITE CAP 20-25.}
\author{Ling-Ju Hung}{Department of Computer Science and Information Engineering, National Taipei University of Business, Taipei, Taiwan}{ljhung@ntub.edu.tw}{https://orcid.org/0000-0001-5659-5507}{}
\author{Ralf Klasing}{Universit\'e de Bordeaux, CNRS, LaBRI, Bordeaux, France\and\url{https://www.labri.fr/perso/klasing}}{ralf.klasing@labri.fr}{https://orcid.org/0009-0004-9421-8525}{ANR project TEMPOGRAL (ANR-22-CE48-0001).}
\author{Lucas Lorieau}{Université Clermont-Auvergne, CNRS, Mines de Saint-Étienne, Clermont-Auvergne-INP, LIMOS, 63000 Clermont-Ferrand, France\and \url{https://perso.limos.fr/~lulorieau/}}{lucas.lorieau@limos.fr}{https://orcid.org/0009-0006-5084-5118}{This author has received financial support from the CNRS through the MITI interdisciplinary programs.}
\author{Peter Rossmanith}{RWTH Aachen University, Aachen, Germany}{rossmani@cs.rwth-aachen.de}{https://orcid.org/0000-0003-0177-8028}{}

\authorrunning{L.-H. Chen, F. Foucaud, L.-J. Hung, R. Klasing, L. Lorieau,
 and P. Rossmanith}
\Copyright{Li-Hsuan Chen, Florent Foucaud, Ling-Ju Hung, Ralf Klasing,
 Lucas Lorieau, and Peter Rossmanith}

\ccsdesc[500]{Theory of computation~Graph algorithms analysis}
\ccsdesc[300]{Theory of computation~Parameterized complexity and exact
 algorithms}
\ccsdesc[300]{Mathematics of computing~Graph theory}
\keywords{shortest paths, quasi-shortest paths, geodesics, quasi-geodesics, isometric path cover, walks, treewidth, pathwidth, parameterized complexity, Blocked CMSO}

\funding{This research was partially supported by the NSTC--CNRS International Emerging Action (IEA) 2024--2025 ``Quasi-Polynomial Time Approximation Algorithms for Some Network Design Optimization Problems'' under grants 113--2927--I--141--501 and 114--2927--I--141--501.}
\acknowledgements{We thank Philip Whittington for bringing the metric-growth
 theorem of Campbell et al.~\cite{CampbellDGGHHMW2023} to our attention.}

\begin{document}
\maketitle

\begin{abstract}
We study how covering a graph by a small number of shortest or approximately shortest paths/walks constrains its treewidth and pathwidth. A path or walk is $(\alpha,\beta)$-quasi-shortest for $\alpha\geq1$ and $\beta\geq 0$ if each contiguous segment has length at most $\alpha$ times the distance between its endpoints plus $\beta$. This relaxation forms a reasonable model of approximate shortest paths permitting local detours, closer to applications.

For graphs vertex-covered by $k$ shortest paths, an exponential upper bound on pathwidth was known [Dumas et al., ISAAC 2022/SIDMA 2024]. We establish the first known polynomial pathwidth bound of $O(k^3)$. For edge-coverage, we improve the previously known $O(k^4)$ upper bound [Baste et al., STACS 2026] to $O(k^2\log k)$. As a main ingredient of the proofs, we first show that if the vertices of a graph can be covered by $k$ $(\alpha,\beta)$-quasi-shortest paths or walks, then its treewidth is $O_{\alpha,\beta}(k^2)$; this is a consequence of the fact that such graphs have linear metric growth and the technique developed in [Campbell et al., E-JC 2023]. To obtain the above pathwidth upper bounds, we combine this result with techniques from the literature in a nontrivial way. The metric growth toolkit also implies improved upper bounds for $H$-minor-free graphs.

On the other hand, we show that there exist connected bipartite cubic graph whose edges can be covered by at most $k$ shortest paths and whose treewidth and pathwidth are in $\Omega(k\log k)$, providing the first superlinear lower bound for the problem. For graphs covered by $k=2$ $(\alpha,0)$-quasi-shortest paths with $\alpha>1$, we show that, unlike treewidth and unlike the case $\alpha=1$ and $\beta=0$ of exact shortest paths, the pathwidth may be unbounded.

Finally, for every fixed rationals $\alpha\geq1$ and $\beta\geq 0$, we give XP algorithms with a running time of $n^{f(k,\alpha,\beta)}$ for some function~$f$ that decide whether the vertices of a graph can be covered by at most $k$ $(\alpha,\beta)$-quasi-shortest paths or walks and return such a family whenever one exists. The algorithms combine our treewidth upper bounds with exact evaluation of Blocked-CMSO queries using the algorithmic meta-theorem from [Dreier et al., LICS 2025]. These algorithms extend a similar result for exact shortest paths from [Dumas et al., ISAAC 2022/SIDMA 2024].
\end{abstract}

\section{Introduction}\label{sec:introduction}

Shortest paths in graphs (also called \emph{geodesics} or \emph{isometric paths}) are a fundamental concept in graph theory and computer science. We are interested in the problem of covering the vertices or the edges of a given graph by a small number of shortest paths, a natural optimization problem that recently attracted a lot of attention~\cite{BasteDMGOP2026,DBLP:journals/dm/ChakrabortyCFV26,DBLP:conf/isaac/ChakrabortyD0FG22,ChakrabortyDFMT2026,DBLP:journals/tcs/ChakrabortyMOPR26,DumasFPT2024}. Such covers provide a simplified model for network design problems of planning small collections of transportation routes that collectively visit the network while limiting detours, such as bus lines~\cite{BertsimasNgYan2021} or patrolling rounds~\cite{ChenChengShaweTaylor2017}. Shortest-path covers also arise in active learning on graphs, where geodesic convexity allows labels along each covering path to be recovered through binary search~\cite{ThiessenG2021}. It was recently shown that a graph whose vertices can be covered by few shortest paths has bounded (exponential) pathwidth~\cite{DumasFPT2024}. Subsequently, a polynomial bound when also the edges are covered was proved~\cite{BasteDMGOP2026}. Whether polynomial bounds also hold for vertex-coverage remained open.

We generalize this line of work by providing the first polynomial upper bound on the pathwidth of a graph when the vertices are covered (and no restriction on also covering the edges). This settles the main open problem in the area~\cite{BasteDMGOP2026,DumasFPT2024}. We also give several improved upper bounds for the edge-covering case, for graphs excluding a fixed minor, or when the treewidth is considered, as well as a nontrivial lower bound. Motivated by real-world scenarios, we then relax the requirement of considering shortest paths only, and consider settings where the paths are ``almost shortest'' paths or walks (allowing a fixed multiplicative and additive distortion for every segment of the path/walk). We show that treewidth is bounded in this setting as well, however pathwidth is not. We use this to generalize the existing algorithmic results for covering graphs by shortest paths~\cite{DumasFPT2024}, providing XP algorithms for covering a graph in this more realistic scenario. (We use the standard terminology that an FPT running time has the form $f(k)n^{O(1)}$, while an XP running time has the form $n^{f(k)}$, up to parameter-dependent factors~\cite{CyganFKLMPPS2015}; $n$ is the input size, $k$ the parameter, and $f$ a computable function.)

\medskip
\noindent\textbf{Quasi-shortest paths for traffic connections and patrol rounds.} Requiring shortest paths can be unnecessarily rigid in applications: suppose for example that the paths represent bus or patrol routes: it is unrealistic that every itinerary is precisely a shortest path, and small detours can be allowed in order to reduce the total number of routes~\cite{BertsimasNgYan2021}. Similarly, it may be useful to allow crossing the same location several times, and to consider \emph{walks} instead of paths~\cite{ChenChengShaweTaylor2017}: consider
a long path with an additional leaf $x$ adjacent to an internal vertex $v$.
The resulting tree has three leaves, so no single simple path covers all its
vertices. However, one patrol officer can
instead follow the long path and, on reaching $v$, make the short excursion
to $x$, before continuing along the path. We therefore
consider a relaxation of the problem, similar to the definition of quasi-geodesics from coarse geometry and geometric group theory (and related to the quasi-isometries from these fields, also transposed to graphs in coarse graph theory), as follows.

For fixed reals $\alpha\geq 1$ and
$\beta\geq0$, a path or walk in a graph $G$ with vertex occurrences $v_1\ldots v_t$ is \emph{$(\alpha,\beta)$-quasi-shortest}
if for any $i,j$ with $i<j$, 
$$j-i\leq \alpha\cdot\dist_G(v_i,v_j)+\beta\;.$$

If $\alpha=1$ and $\beta=0$, the above definition simply describes shortest paths. Unlike a guarantee that applies only to the route's endpoints, the above condition
imposes a distortion for every segment of the path/walk. This definition is inspired by classic works in coarse geometry and geometric group theory on quasi-geodesics and quasi-isometries~\cite{BridsonHaefliger1999,Gromov1993}. 
The notion of a quasi-isometry was recently extended to coarse graph theory~\cite{Georgakopoulos2025,nguyen2025coarsetreewidth}, following earlier works on the similar notion of distance-approximating embeddings~\cite{ChepoiDraganEstellonHabibVaxes2008}. The definition of quasi-shortest paths we use is also similar to the notion of \emph{$(\alpha,\beta)$-spanners}; such a spanner is a spanning subgraph $H$ of a host graph $G$ such that, for every pair $x,y$ of vertices, $\dist_H(x,y)\leq\alpha\cdot\dist_G(x,y)+\beta$. The main difference is that in a spanner, the distortion condition may not apply to all $x$--$y$ paths, but only to the shortest ones. Less constrained distortion conditions, allowing a multiplicative distortion factor for the whole path, have been used in the context of approximate all-pairs-shortest paths~\cite{DBLP:conf/soda/SahaY24}. \emph{Local geodesics} (where every small subpath must be a shortest path, but without a global condition) are also considered in discrete geometry~\cite{ChalopinChepoiNaves2015}. These two conditions (global distortion together with local optimality) have been combined in works considering \emph{alternative routes}~\cite{DBLP:journals/jea/AbrahamDGW13}.

\medskip
\noindent\textbf{Further related work.} Covering a graph by a minimum number of (unrestricted) paths is a fundamental and extensively studied problem, called \textsc{Path Cover}~\cite{dagPC,PC-cocomp,survey}; think of \textsc{Hamiltonian Path} as a prominent special case. The shortest path requirement that we consider is substantially stronger than an ordinary path cover: a Hamiltonian graph is covered by one path but may require arbitrarily many shortest paths to be fully covered (consider for example a square grid or a complete graph).

The problem of covering a graph by a minimum number of shortest paths arose in the context of the \emph{Cops and Robber} pursuit-evasion game~\cite{AF84}. Covering by shortest paths still plays an important role in more modern techniques in the area, see for example the concept of cop-decompositions, which are special tree decompositions whose definition involves shortest path covers~\cite{cop-decs}. The problem was also studied for its own sake from a graph-theoretic perspective, see for example~\cite{FF01,FNHC01,DBLP:journals/dm/PanC06}, and refer to the survey~\cite{survey} on various types of path covering problems.

Dumas, Foucaud, Perez, and Todinca~\cite{DumasFPT2024} proved that a graph
vertex-covered by $k$ shortest paths has BFS-width (the maximum size of a BFS distance-layer~\cite{DBLP:conf/esa/EppsteinGL25}), and thus pathwidth, $O(k\cdot3^k)$, while the more constrained setting of also covering the edges
gives $O(3^k)$~\cite{DumasFPT2024}. 

Baste, De Meyer, Giocanti, Objois, and Picavet~\cite{BasteDMGOP2026}
subsequently proved that a graph edge-covered by $k$ shortest paths has
pathwidth $O(k^4)$. More precisely, their proof constructs a polynomial-size
vertex set whose deletion leaves pathwidth $O(k)$. They asked whether the
pathwidth might in fact be linear in $k$.

The (non-)existence of similar bounds when the graph is covered by isometric \emph{trees} was also investigated, see~\cite{BasteDMGOP2026,bastide2025counterexamplesstatementsisometricgraph}.

Hatzel and Pilipczuk~\cite{HatzelP2026} study a relaxation of the question in the realm of \emph{coarse graph theory}:
the $k$ routes remain shortest paths, but it is enough for every vertex
to lie within distance $\rho$ of one of them. They proved that such graphs are $(3,12\rho)$-quasi-isometric to graphs of pathwidth $k^{O(k)}$. 

The problem of covering a graph with a minimum possible number of exact shortest paths (often called \textsc{Isometric Path Cover}~\cite{DBLP:journals/dm/ChakrabortyCFV26,survey} or \textsc{Shortest Path Cover}~\cite{FernauFMMPR2025,ThiessenG2021}) is NP-hard, even for chordal graphs of diameter~2~\cite{DBLP:conf/isaac/ChakrabortyD0FG22} and 5-degenerate bipartite graphs of diameter~4~\cite{FernauFMMPR2025}. On the other hand, it is polynomial-time solvable on block graphs~\cite{IPC-block}, chain graphs~\cite{DBLP:journals/tcs/ChakrabortyMOPR26} and cographs~\cite{DBLP:journals/tcs/ChakrabortyMOPR26}. FPT algorithms for the neighborhood diversity and the dual parameter $n-k$ were designed in~\cite{FernauFMMPR2025}.

In the fixed-terminal version of \textsc{Shortest Path Cover}, the $k$ endpoint pairs are
part of the input, and the authors in~\cite{DumasFPT2024} give an FPT algorithm in $k$ for deciding whether the input graph is coverable by $k$ shortest paths. When the algorithm
must also choose the endpoints, enumerating the $2k$ endpoints and invoking
the fixed-terminal algorithm gives time $f(k)n^{2k+O(1)}$, hence XP~\cite{DumasFPT2024}.

On chordal graphs (and more generally graphs of bounded treelength~\cite{DBLP:conf/isaac/ChakrabortyD0FG22} or hyperbolicity~\cite{DBLP:journals/dm/ChakrabortyCFV26}), on outerstring graphs~\cite{DBLP:journals/dm/ChakrabortyCFV26} and on (theta,prism,pyramid)-free graphs~\cite{DBLP:journals/dm/ChakrabortyCFV26}, \textsc{Shortest Path Cover} was shown to admit a constant-factor approximation algorithm based on Dilworth's theorem on poset chain coverings. For a connected graph, write $\operatorname{diam}(G)$ for the maximum distance
between two vertices. Thiessen and G\"artner~\cite[Theorem~11]{ThiessenG2021}
proved that \textsc{Shortest Path Cover} can be approximated within a factor of
$1+\ln(\operatorname{diam}(G)+1)$ in $O(|V(G)|^4)$ time.  

The version where the paths must be vertex-disjoint, which is used in the recent definition of \emph{geodesic treewidth}~\cite{geodesictreewidth} (which minimizes the treewidth of the graph obtained by contracting each path of a shortest path partition), has also been studied~\cite{ChakrabortyDFMT2026,FernauFMMPR2025,Manuel2021}. Notably, this variant is W[1]-hard when parameterized by pathwidth, but open for solution size~\cite{ChakrabortyDFMT2026}. The edge-disjoint variant is W[1]-hard on \emph{oriented} graphs (in fact, DAGs) when parameterized by solution size~\cite{FernauFMMPR2025}.

\medskip
\noindent\textbf{Our results.} Our main results concern improved upper and lower bounds on treewidth and pathwidth for graphs coverable by $k$ (quasi-)shortest paths, which answer open problems from~\cite{BasteDMGOP2026,DumasFPT2024}. We first prove a quadratic $O(k^2)$ bound on the treewidth, which we then use to obtain our upper bounds on pathwidth.

To this end, we observe that the concept of \emph{metric growth} (see Section~\ref{sec:preliminaries} for the definition) can be applied to such graphs: roughly speaking, the maximum intersection of any quasi-shortest path with any $r$-ball grows only linearly with the radius $r$. Hence, graphs whose vertices are coverable by $k$ $(\alpha,\beta)$-quasi-shortest paths have metric growth $O_{\alpha,\beta}(k)$, which directly implies treewidth $O_{\alpha,\beta}(k^2)$ using a result from Campbell et al.~\cite{CampbellDGGHHMW2023}. This bound is reduced to $O_{\alpha,\beta,H}(k)$ for graphs excluding a graph $H$ as a minor.

Then, we focus on pathwidth and on exact shortest paths. We use the above treewidth upper bound in conjunction with the technique of Dumas et al.~\cite{DumasFPT2024} to obtain an $O(k^3)$ upper bound on pathwidth when the vertices (and not necessarily the edges) of the graph are covered by $k$ exact shortest paths. This shows that also pathwidth is polynomially bounded in this case, and solves the main open problem in the area. 

We next improve the upper bound in the case where the $k$ shortest paths cover the edges. To this end, we prove a general conversion lemma: if $\tw(G)\leq w$ and deleting $r$ vertices leaves a graph with pathwidth at most $p$, then $\pw(G)=O((w+p)\log r)$. We apply this lemma to the pathwidth-modulator of Baste et al.~\cite{BasteDMGOP2026} and our treewidth upper bound, to yield the quadratic-logarithmic bound $O(k^2\log k)$ on pathwidth. This improves the previous $O(k^4)$ bound from Baste et al.~\cite{BasteDMGOP2026}. For graphs excluding a graph $H$ as a minor, we obtain $O_H(k\log k)$.

Regarding lower bounds, we prove that both the vertex-cover and the edge-cover pathwidth and treewidth upper bounds cannot be linear. Indeed, we show that certain cubic expanders of order $n$ with large girth have treewidth $\Omega(n)$ and can be edge-covered by $O(n/\log n)$ shortest paths. This implies an $\Omega(k\log k)$ treewidth (and pathwidth) lower bound, which is the first known superlinear lower bound. This answers negatively a question from Baste et al.~\cite{BasteDMGOP2026}.

When $\alpha>1$, we also provide a construction of graphs covered by two $(\alpha,0)$-quasi-shortest paths but with arbitrarily large pathwidth, showing that the quasi-shortest setting behaves differently than the exact shortest setting.

A summary of the bounds we obtain can be found in \Cref{tab:results}. 

\begin{table}[ht]
\caption{Summary of our bounds for general graphs coverable by $k$ (quasi-)shortest paths.} 
\label{tab:results}
\begin{tabularx}{\textwidth}{@{}
 >{\raggedright\arraybackslash}X
 >{\raggedright\arraybackslash}p{.21\textwidth}
 >{\raggedright\arraybackslash}X
 >{\centering\arraybackslash}X
 >{\centering\arraybackslash}X
 @{}}
\toprule
coverage & covering objects & lower bound\footnotemark[1] & \multicolumn{2}{c}{upper bound\footnotemark[2]} \\
\cmidrule(lr){4-5}
 & & & treewidth & pathwidth \\
\midrule
vertices/edges & $(\alpha,\beta)$-quasi-shortest paths/walks
 & $\Omega(k\log k)$  & $O_{\alpha,\beta}(k^2)$ & unbounded in $k$ when $\alpha>1$\\[6mm]
vertices & exact shortest paths
 & $\Omega(k\log k)$
 & $O(k^2)$
 & $O(k^3)$ \\[2mm]
edges & exact shortest paths
 & $\Omega(k\log k)$
 & $O(k^2)$
 & $O(k^2\log k)$ \\
\bottomrule
\end{tabularx}

\scriptsize\smallskip
${}^1$ The lower bounds hold for both treewidth and pathwidth. See \Cref{thm:edge-lower}.\\
${}^2$ See \Cref{thm:treewidth,cor:exact-treewidth,thm:walk-treewidth,cor:minor-closed-treewidth,thm:edge-upper,thm:twopaths}.
\end{table}

Then, we apply the treewidth upper bound for quasi-shortest paths to obtain XP algorithms for \textsc{$(\alpha,\beta)$-Quasi-Shortest Path Cover} and \textsc{$(\alpha,\beta)$-Quasi-Shortest Walk Cover} (formally defined in \Cref{sec:preliminaries}) parameterized by the number $k$ of paths. The main algorithmic ingredient is the logic introduced by Dreier, Ganian, and Hamm~\cite{DreierGH2025}. Their \emph{Blocked CMSO} permits restricted cardinality comparisons under universal quantification and admits both approximate and exact evaluation theorems on bounded-treewidth graphs. We use their theorem as a black-box. The exponent of the running time depends on the formula and the width. The former depends on $k$, $\alpha$ and $\beta$, and the latter is in $O_{\alpha,\beta}(k^2)$; hence, we obtain a running time that is XP in $k$ for fixed rationals $\alpha$ and $\beta$. 
These algorithms generalize the XP algorithm for exact shortest paths obtained by Dumas et al.~\cite{DumasFPT2024}, to the setting of quasi-shortest paths and walks.

\medskip
\noindent\textbf{Outline.} After the preliminaries, \Cref{sec:treewidth} contains the treewidth upper bounds, \Cref{sec:vertex-upper-pathwidth} the pathwidth upper bound for the vertex-covering case, \Cref{sec:edge-upper} proves the pathwidth upper bound for the edge-covering case, \Cref{sec:edge-lower} contains the pathwidth lower bound for exact shortest paths, \Cref{sec:quasi-shortest-unbounded} contains the unbounded pathwidth construction for quasi-shortest paths, and \Cref{sec:blocked} 
gives the XP algorithms. We conclude in \Cref{sec:discussion}. 

\section{Preliminaries}\label{sec:preliminaries}

All graphs are finite, simple, undirected, and unweighted.  We allow paths
of length zero, consisting of one vertex.  Every route budget $k$ is a
positive integer. 
For vertices $x,y$ of a graph $G$, let
$\dist_G(x,y)$ denote the minimum number of edges in an $x$--$y$ path. 
A path is \emph{shortest} if its length equals the distance between its
endpoints; every subpath of a shortest path is shortest.

A \emph{walk} is a sequence  $W=w_0e_1w_1\cdots e_tw_t$, where $e_j=w_{j-1}w_j$.  Its \emph{length} is $t$, counting edge occurrences, so
repeated uses of an edge are counted separately.  A \emph{subwalk}
$W[a,b]$ is the contiguous sequence from the occurrence $w_a$ to $w_b$.
A \emph{path} is a walk with pairwise distinct vertices.

\begin{definition}\label{def:quasi-shortest}
Let $\alpha\geq1$ and $\beta\geq0$. A walk $W$ in $G$ is
\emph{$(\alpha,\beta)$-quasi-shortest} if every subwalk $W[a,b]$ satisfies $b-a\leq \alpha\cdot\dist_G(w_a,w_b)+\beta.$
An \emph{$(\alpha,\beta)$-quasi-shortest path} is an $(\alpha,\beta)$-quasi-shortest walk that is a simple path.
\end{definition}

A path is shortest precisely when it is $(1,0)$-quasi-shortest.  A
$(1,0)$-quasi-shortest walk cannot repeat a vertex, because the subwalk
between two consecutive occurrences would have positive length and endpoint
distance zero. Thus, it is also a shortest path.

A family $\mathcal P$ of paths is called a \emph{vertex path cover} if
$V(G)=\bigcup_{P\in\mathcal P}V(P)$, and an \emph{edge path cover} if
$E(G)=\bigcup_{P\in\mathcal P}E(P)$.  Different paths may share vertices
and edges.  A \emph{vertex walk cover} is defined analogously using the
distinct vertices visited by each walk. For fixed $\alpha\geq 1$ and $\beta\geq 0$ ($\alpha,\beta$ may be real, or rational for practical purposes), we define the two following problems.

\problemdec{\textsc{$(\alpha,\beta)$-Quasi-Shortest Path Cover}}{A graph $G$ and an 
integer \(k\).}{Determine whether at most
$k$ $(\alpha,\beta)$-quasi-shortest paths cover $V(G)$. For positive instances, return the path cover.}

\problemdec{\textsc{$(\alpha,\beta)$-Quasi-Shortest Walk Cover}}{A graph $G$ and an 
integer \(k\).}{Determine whether at most
$k$ $(\alpha,\beta)$-quasi-shortest walks cover $V(G)$. For positive instances, return the walk cover.}

For a connected graph $H$, let $\rad(H)=\min_{z\in V(H)}\max_{v\in V(H)}\dist_H(z,v)$ be the \emph{radius} of $H$. The \emph{metric growth} of a graph $G$ is the following function $\mg_G:\mathbb{N}\to\mathbb{N}$: $\mg_G(r)=\sup\{\,|V(H)|\mid H\subseteq G,\ H\text{ connected}, \rad(H)\leq r\,\}$. For $v\in V(G)$ and $r\geq0$, we denote the \emph{ball} of radius $r$ centered at $v$ by $B_G(v,r)=\{\,u\mid\dist_G(u,v)\leq r\,\}$.

A \emph{tree decomposition} of $G$ is a tree $T$ with bags
$(B_t)_{t\in V(T)}$ such that every graph vertex is in a bag, the endpoints
of every graph edge occur together in a bag, and the bags containing any
fixed graph vertex induce a connected subtree of $T$.  Its width is
$\max_t|B_t|-1$, and the \emph{treewidth} $\tw(G)$ is the minimum width.
A \emph{path decomposition} is a tree decomposition whose decomposition tree
is a path; its minimum width is the \emph{pathwidth} $\pw(G)$.  Both
parameters are subgraph-monotone, and $\tw(G)\leq\pw(G)$. These are the standard width parameters; see, for example,
Cygan et al.~\cite{CyganFKLMPPS2015}. For the sake of clarity, we will sometimes consider that a path decomposition is an ordered sequence of bags. In this case, the \emph{concatenation} of two path decompositions is simply the concatenation of the two sequences of bags. Note that in general, the concatenation of two path decomposition does not yield a path decomposition.

A graph $H$ is a \emph{minor} of $G$ if it can be obtained from $G$ by
vertex- and edge-deletions and edge-contractions.

\section{Treewidth upper bounds}\label{sec:treewidth}

We show next how \cref{thm:growth} yields a polynomial treewidth upper bound for our problem, when few paths cover the vertices of a graph $G$.  (Notice that when the paths cover $E(G)$, except isolated vertices, they also cover $V(G)$ and all results from this section apply as well.) We will use the following theorems relating metric growth and treewidth.

\begin{theorem}[{Campbell, Distel, Gollin, Harvey, Hendrey,
Hickingbotham, Mohar, Wood~\cite[Theorem~1]{CampbellDGGHHMW2023}}]\label{thm:growth}
For any graph $G$, if there exists a $c\geq 1$ such that the metric growth $\mg_G$ of $G$ satisfies $\mg_G(r)\leq cr$ for every integer $r\geq1$, then $\tw(G)\leq 49c^2+30c$.
\end{theorem}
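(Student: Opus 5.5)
This is the metric-growth theorem of Campbell et al., used in the paper as a black box. If I had to prove it, I would go through balanced separators. By the standard Robertson--Seymour argument, it suffices to show the following: for some $s=O(c^2)$ and every vertex set $W$ with $|W|=O(s)$, there is a set $S$ with $|S|\leq s$ such that every component of $G-S$ contains at most $|W|/2$ vertices of $W$. That argument then gives $\tw(G)\leq O(s)$, and tracking constants should land in the range $49c^2+30c$. We may work inside a single connected component of $G$. The one consequence of the hypothesis I would use is this: a BFS ball $B_G(v,r)$ induces a connected subgraph of radius at most $r$, so $|B_G(v,r)|\leq cr$. Writing $L_i$ for the $i$-th BFS layer from $v$, this gives $\sum_{i\leq r}|L_i|\leq cr$ for every $r\geq1$. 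Hence the layers have average size at most $c$, and every window of consecutive layers contains a layer of size $O(c)$.

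Given $W$, I would fix a vertex $v$ and grow the BFS ball from it. Let $r^*$ be the first radius with $|B_G(v,r^*)\cap W|>|W|/2$. Since the ball has at most $cr^*$ vertices, $r^*\geq |W|/(2c)$, which is large compared with $c$ once $|W|$ is a large multiple of $c^2$. If some thin layer $L_i$ with $i$ close to $r^*$ splits $W$ in a balanced way, that layer is the separator. The trouble is that the $W$-count inside the ball can jump: many vertices of $W$ may sit in a few consecutive layers near $r^*$. These layers are then necessarily thick, but a single thin layer cannot cut through them.

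To get around this, I would use a separator made of $O(c)$ thin layers, each of size $O(c)$; this is where the $c^2$ comes from. Concretely, I would iterate. Cut along a thin layer near the threshold. If the outside still contains too much of $W$, restart BFS from a new center in the heavy outer part, using the fact that every component of $G-L_i$ again has linear growth. Charge each round to a constant fraction of $W$ lying within a bounded radius of that round's center, so that the number of rounds is $O(c)$. Alternatively, I would pick the $O(c)$ thin layers inside the critical window $[r^*/2,r^*]$ simultaneously, so that between consecutive chosen layers the annuli each contain few vertices of $W$.

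The main obstacle I expect is exactly this balancing step. I need to show that $O(c)$ rounds, or $O(c)$ layers, suffice to bring every component down to at most $|W|/2$ vertices of $W$. This requires a careful potential argument: each layer must capture a $1/\Theta(c)$ fraction of the remaining mass of $W$, and the radius budget $cr$ must not be exhausted. The rest of the proof (the BFS averaging and the separator-to-treewidth conversion) is routine.
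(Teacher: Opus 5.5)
\textbf{The paper's proof.} The paper does not prove this statement. It imports Theorem~1 of Campbell et al.\ as a black box, so your proposal has to stand on its own.

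\textbf{The gap.} The step you defer as ``the main obstacle'', that $O(c)$ cuts along thin layers balance $W$, is the whole content of the theorem. Neither mechanism you suggest works as described.

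First, it is not true that every window of consecutive BFS layers contains a layer of size $O(c)$. The hypothesis only yields a layer of size at most $2c$ in each window $\lceil r/2\rceil,\dots,r$, since $|B_G(v,r)|\le cr$. For instance, subdivide each edge between depths $i-1$ and $i$ of a complete binary tree of height $h$ into a path of length $3^{h-i}$. This graph has linear growth with an absolute constant, yet the last four BFS layers from the root each have at least $2^{h-1}$ vertices.

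Second, take $W$ to be the set of leaves of this graph. All of $W$ lies in a single layer, so no choice of layers from the root makes the annuli between them light in $W$. Yet the root alone is a balanced separator. Balance is decided by the components of $G$ minus the chosen layers, which your annulus criterion ignores.

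Third, the iterative version has no progress measure. The count $|B_G(v,r)\cap W|$ can stay near $0$ until just below $r^*$. So a thin layer in $[r^*/2,r^*]$ may cost $2c$ vertices while the heavy component keeps almost all of $W$. The charging that would bound the number of rounds by $O(c)$ is exactly what is missing.

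\textbf{A route that closes the gap.} Instead of building separators, argue by contradiction from a well-linked set. A minimum cover $X$ of a bramble of maximum order has $|X|=\tw(G)+1$. Minimality implies that any two disjoint subsets $Y,Z\subseteq X$ of equal size $m$ are joined by $m$ disjoint paths. Indeed, if some $S$ with $|S|<m$ separated $Y$ from $Z$, then $(X\setminus Y)\cup S$ or $(X\setminus Z)\cup S$ would be a smaller cover, because bramble elements pairwise touch.

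(i) For every vertex $x$ and integer $r\ge1$, we have $\min\{|X\cap B_G(x,r)|,|X\setminus B_G(x,2r)|\}<2c$. Otherwise, take $m=\lceil 2c\rceil$ disjoint paths from $B_G(x,r)$ to the complement of $B_G(x,2r)$. Each path meets every layer at distance $r+1,\dots,2r$ from $x$, so $|B_G(x,2r)|>rm\ge 2cr$, contradicting $|B_G(x,2r)|\le 2cr$.

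(ii) For $y\in X$, let $\rho(y)$ be the least $r$ with $|X\cap B_G(y,r)|\ge 2c$. Let $x_0\in X$ minimize it and set $\rho=\rho(x_0)$; then $\rho\ge2$, as $|B_G(y,1)|\le c$. Every $y\in X$ has fewer than $2c$ points of $X$ within distance $\rho-1$. By (i), fewer than $2c$ points of $X$ lie outside $B_G(x_0,2\rho)$.

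(iii) Greedily select more than $(|X|-2c)/(2c)$ points of $X\cap B_G(x_0,2\rho)$ at pairwise distance at least $\rho$. Suppose at least two are selected (otherwise $|X|<4c$). The balls of radius $\lfloor(\rho-1)/2\rfloor$ around them are disjoint and each has at least $\rho/2$ vertices. They all lie in $B_G(x_0,\lfloor 5\rho/2\rfloor)$, which has at most $5c\rho/2$ vertices, so at most $5c$ points are selected.

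Hence $|X|<10c^2+2c$, so $\tw(G)=O(c^2)$, within the stated bound. The quadratic term comes from packing about $|X|/c$ disjoint balls inside one ball, not from $O(c)$ layers of size $O(c)$.
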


The same authors also used grid minors to prove the following.

\begin{theorem}[{Campbell, Distel, Gollin, Harvey, Hendrey,
Hickingbotham, Mohar, Wood~\cite[Corollaries~10 and 11]{CampbellDGGHHMW2023}}]\label{thm:growth-minor}
For any fixed graph $H$ and any graph $G$ excluding $H$ as a minor, if there exists an integer $c>1$ such that the metric growth $\mg_G$ of $G$ satisfies $\mg_G(r)\leq cr$ for every integer $r\geq1$, then $\tw(G)= O_H(c)$. In particular, if $G$ is planar, then $\tw(G)\leq 12c+1$.
\end{theorem}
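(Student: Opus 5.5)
Since \cref{thm:growth-minor} is cited from Campbell et al.~\cite{CampbellDGGHHMW2023}, the plan below is a reconstruction of how I would prove it, not the authors' argument; the combinatorial step in the third paragraph is not worked out. The approach is to go through grid minors. The first ingredient is the linear grid-minor theorem for $H$-minor-free graphs due to Demaine and Hajiaghayi. It gives a constant $a_H$ such that every $H$-minor-free graph $G$ with $\tw(G)\geq a_H g$ contains the $g\times g$ grid as a minor.

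It then suffices to prove a statement about grids alone: if $G$ has a $g\times g$ grid minor, then $\mg_G(r)>cr$ for some integer $r\geq 1$ whenever $g\geq Kc$, for an absolute constant $K$. Given this, $\mg_G(r)\leq cr$ for all $r$ forces $g<Kc$, and hence $\tw(G)<a_H K c=O_H(c)$. For the planar bound $12c+1$, I would instead use the sharper planar grid theorem ($\tw\leq$ roughly $5g$ for planar graphs without a $g\times g$ grid minor) and track the constants.

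To prove the grid statement, fix a model $(X_v)_{v\in V(\mathrm{grid})}$ of the grid in $G$ and restrict attention to the subgraph $M$ spanned by the branch sets and the edges realising grid edges. Then I would split into cases according to the distances in $G$ between branch sets.

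\emph{Short case.} Suppose some central $h\times h$ subgrid, with $h=\Theta(g)$, is realised so that every vertex of its model lies within distance $r$ of a fixed vertex $z$, with $r\leq h^2/(2c)$. Then the connected subgraph consisting of that submodel together with shortest paths back to $z$ has radius at most $r$. It also has at least $h^2>cr$ vertices, which contradicts the growth bound.

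\emph{Long case.} Otherwise branch sets are long. I would then apply a BFS layering from a vertex $z$ of a central branch set. The growth bound implies $|B_G(z,r)|\leq cr$ for every $r$, so the layers have average size at most $c$. On the other hand, each of the $g$ vertex-disjoint row paths of the grid model that passes near $z$ and leaves $B_G(z,r)$ must meet every intermediate layer. Choosing $r$ so that $\Omega(g)$ rows both reach $z$'s neighbourhood and exit $B_G(z,r)$ gives $|B_G(z,r)|\geq \Omega(g)\cdot r$, which contradicts $\mg_G(r)\leq cr$ once $g\geq Kc$.

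I expect this dichotomy to be the main obstacle: making precise that in a grid model either a large piece is metrically compact, or many disjoint rows cross many distance layers around a common centre. The standard fix I would try first is a counting/averaging argument over the $g$ rows and $g$ columns through the branch set of a central grid vertex. Every row meets every column, so all rows and columns are ``anchored'' near a common branch set. One then picks the radius $r$ at which a constant fraction of these anchored paths have left $B_G(z,r)$. Each such path contributes at least $r$ distinct vertices of the ball, and the paths are pairwise disjoint, which gives the required $\Omega(gr)$ lower bound on $|B_G(z,r)|$.
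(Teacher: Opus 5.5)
The paper does not prove this statement. It quotes Campbell et al.\ and only remarks that they used grid minors. Your outer reduction is the natural route: the Demaine--Hajiaghayi linear grid-minor theorem reduces everything to ``a $g\times g$ grid minor with $g\geq Kc$ forces $\mg_G(r)>cr$ for some $r$''. Your short case is also a correct implication.

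\textbf{The gap.} That grid lemma is the whole content of the theorem, and your sketch of it does not go through. The load-bearing claim in your averaging fix is that, since every row meets every column, ``all rows and columns are anchored near a common branch set''. This is false. The branch set $X_v\ni z$ lies on one row and one column only, and row $i$ meets the central column in $X_{(i,j_0)}$, which may be arbitrarily far from $z$ in $G$. So nothing supplies the $\Omega(g)$ disjoint paths starting within $o(r)$ of $z$ that your annulus count needs.

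Moreover, the centre cannot be fixed in advance. Subdivide the four edges at the central vertex $v$ of the $k\times k$ grid into paths of length $L\geq k^2$. Then $|B_G(v,r)|\leq 5r$ for every $r\geq1$, so for any $c\geq5$ no ball around $z=v$ violates the growth bound. Yet $G$ still has a $k\times k$ grid minor, and balls centred in the unsubdivided part grow quadratically. With this choice of $z$, your long case fails for every radius.

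\textbf{What is missing} is an extremal choice of centre and radius, with Menger's theorem producing the crossing paths. One way to do this:
\begin{itemize}
\item Work in a subdivided wall $W\subseteq G$ with $k=\Omega(g)$ disjoint rows and columns, every row meeting every column; balls of $W$ lie inside balls of $G$.
\item Let $\rho(z)$ be the least radius at which $B_W(z,\rho)$ meets $2c$ rows or $2c$ columns, and choose $z$ minimising $\rho=\rho(z)$.
\item Case 1: $V(W)\setminus B_W(z,3\rho)$ also meets $2c$ rows or $2c$ columns. Then no set of fewer than $2c$ vertices separates it from $B_W(z,\rho)$, because all rows and columns avoiding such a set lie in one component. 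Menger gives $2c$ disjoint paths with $2\rho$ vertices each in the annulus, so $|B_G(z,3\rho)|\geq 4c\rho>3c\rho$.
\item Case 2: otherwise, more than $k-2c$ whole columns lie in $B_W(z,3\rho)$. Minimality of $\rho$ forces every column segment meeting $2c$ rows to have at least $2\rho$ vertices; otherwise its middle vertex would have a smaller $\rho$. Hence $|B_G(z,3\rho)|\geq(k-2c)\cdot 2\rho\lfloor k/(2c)\rfloor>3c\rho$ once $k\geq 4c$.
\end{itemize}

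\textbf{The planar constant.} Note that $12c+1=6(2c+1)-5$, which is the Robertson--Seymour--Thomas bound $\tw\leq 6g-5$ at $g=2c+1$. So the planar constant needs the sharp form of the lemma: a $(2c+1)\times(2c+1)$ grid minor already violates $\mg_G(r)\leq cr$. An unspecified $K$ combined with ``roughly $5g$'' does not yield $12c+1$.
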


The key element to apply these results to our setting is the following lemma.

\begin{lemma}\label{lem:ball}
Let $P$ be an $(\alpha,\beta)$-quasi-shortest path in $G$.  For every
$z\in V(G)$ and $r\geq0$, $|V(P)\cap B_G(z,r)| \leq 2\alpha r+\beta+1$.
\end{lemma}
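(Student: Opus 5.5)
The plan is a direct extremal argument along the path. Write $P=v_1v_2\cdots v_t$ and consider the vertices of $P$ lying in the ball $B_G(z,r)$. Since $P$ is a simple path, these are distinct occurrences with indices $i_1<i_2<\cdots<i_m$, where $m=|V(P)\cap B_G(z,r)|$. If $m=0$ there is nothing to prove, so assume $m\geq1$. I will take the first and last of these, $v_{i_1}$ and $v_{i_m}$, and bound the index gap $i_m-i_1$ from above; the count $m$ is then controlled by the number of integers in the interval $[i_1,i_m]$.

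The first step bounds the distance between the two extreme occurrences. Both $v_{i_1}$ and $v_{i_m}$ lie in $B_G(z,r)$, so the triangle inequality through $z$ gives $\dist_G(v_{i_1},v_{i_m})\leq 2r$. The second step applies Definition~\ref{def:quasi-shortest} to the subwalk $P[i_1,i_m]$. This yields $i_m-i_1\leq \alpha\cdot\dist_G(v_{i_1},v_{i_m})+\beta\leq 2\alpha r+\beta$.

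The final step is counting. The indices $i_1,\dots,i_m$ are distinct integers in $[i_1,i_m]$, so $m\leq i_m-i_1+1\leq 2\alpha r+\beta+1$, which is the claimed bound. No step here is a real obstacle. The only point needing care is that distinctness of the vertices of $P$ is what turns "occurrences in the ball" into "vertices in the ball"; it is also why the statement is phrased for paths. For walks, the same argument bounds the number of occurrences and hence the number of distinct vertices visited within the relevant window. One might also worry about non-integer $\alpha,\beta$, but the inequality $m\leq 2\alpha r+\beta+1$ holds as a real inequality, so no rounding issue arises.
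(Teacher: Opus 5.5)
Your proposal is correct and follows essentially the same argument as the paper. You take the first and last vertices of $P$ in the ball, bound their distance by $2r$ via $z$, apply quasi-shortness to the subpath between them, and count the vertices of that subpath.
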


\begin{proof}
If the intersection between $V(P)$ and $B_G(z,r)$ is nonempty, let $x$ and $y$ be its first and last
vertices in the order along $P$, and let $s=|E(P[x,y])|$. 
Quasi-shortness and the triangle inequality give
$s\leq \alpha\cdot\dist_G(x,y)+\beta
 \leq\alpha(\dist_G(x,z)+\dist_G(z,y))+\beta
 \leq 2\alpha r+\beta$.
Every vertex of $P$ in $B_G(z,r)$ lies on $P[x,y]$, which has $s+1\leq 2\alpha r+\beta +1$ vertices, and the bound follows.
\end{proof}

We now prove the main result of this section, which is a direct corollary of \cref{lem:ball} and \cref{thm:growth}
by Campbell, Distel, Gollin, Harvey, Hendrey,
Hickingbotham, Mohar, and Wood~\cite[Theorem~1]{CampbellDGGHHMW2023}.

\begin{theorem}\label{thm:treewidth}
Let $\alpha\geq1$ and $\beta\geq0$, and set $
 a=2\alpha+\beta+1$. 
 If $V(G)$ is covered by $k$ $(\alpha,\beta)$-quasi-shortest paths,
then $\tw(G)\leq49a^2k^2+30ak.$
\end{theorem}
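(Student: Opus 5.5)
The plan is to verify the hypothesis of \cref{thm:growth} with $c=ak$, and then read off the treewidth bound. Fix the covering family $P_1,\dots,P_k$ of $(\alpha,\beta)$-quasi-shortest paths with $V(G)=\bigcup_i V(P_i)$. Let $r\geq1$ be an integer and let $H\subseteq G$ be connected with $\rad(H)\leq r$. Choose a center $z\in V(H)$ with $\max_{v\in V(H)}\dist_H(z,v)\leq r$.

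Since $H$ is a subgraph of $G$, we have $\dist_G(z,v)\leq\dist_H(z,v)\leq r$ for every $v\in V(H)$. Hence $V(H)\subseteq B_G(z,r)$. Because the paths cover $V(G)$, we get
\[
|V(H)|\leq|B_G(z,r)|\leq\sum_{i=1}^k|V(P_i)\cap B_G(z,r)|.
\]
\Cref{lem:ball} bounds each term by $2\alpha r+\beta+1$.

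It remains to compare this with a linear function of $r$. For $r\geq1$ we have $\beta+1\leq(\beta+1)r$, so $2\alpha r+\beta+1\leq(2\alpha+\beta+1)r=ar$. Summing over the $k$ paths gives $|V(H)|\leq akr$. Taking the supremum over all such $H$ yields $\mg_G(r)\leq akr$ for every integer $r\geq1$.

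Set $c=ak$. Since $a\geq 3$ and $k\geq1$, we have $c\geq1$, so \cref{thm:growth} applies and gives $\tw(G)\leq 49c^2+30c=49a^2k^2+30ak$.

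I do not expect a real obstacle. The only points needing care are two. First, the distance in $H$ dominates the distance in $G$, which is what places $H$ inside a $G$-ball. Second, the additive constant $\beta+1$ must be absorbed into the linear term, and this uses $r\geq1$; that is exactly why $a$ includes $\beta+1$.
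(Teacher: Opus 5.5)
Your proposal is correct and follows the paper's own proof: you place $H$ inside the ball $B_G(z,r)$, apply \cref{lem:ball} to each covering path, absorb $\beta+1$ into $ar$ using $r\geq1$, and invoke \cref{thm:growth} with $c=ak$. Your explicit checks that $c\geq1$ and that distances in $G$ are at most those in $H$ are exactly the (implicit) details the paper relies on.
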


\begin{proof}
Let $H\subseteq G$ be connected with radius at most $r\geq1$, and choose a
center $z$ of $H$.  Distances in $G$ can only be shorter than in $H$, so
$V(H)\subseteq B_G(z,r)$.  If $P_1,\ldots,P_k$ cover $V(G)$, then
\cref{lem:ball} and $r\geq1$ give $|V(H)|
 \leq\sum_{i=1}^k|V(P_i)\cap B_G(z,r)|
 \leq akr$. 
Thus, the metric growth of $G$ satisfies $\mg_G(r)\leq akr$, and
\cref{thm:growth} with $c=ak$ proves the claim.
\end{proof}

\begin{corollary}\label{cor:exact-treewidth}
If $V(G)$ is covered by $k$ shortest paths, then $\tw(G)\leq441k^2+90k$.
\end{corollary}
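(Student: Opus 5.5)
This corollary is the special case $\alpha=1$, $\beta=0$ of \cref{thm:treewidth}. As noted in the preliminaries, a path is shortest exactly when it is $(1,0)$-quasi-shortest. So a cover of $V(G)$ by $k$ shortest paths is a cover by $k$ $(1,0)$-quasi-shortest paths, and \cref{thm:treewidth} applies directly.

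With $\alpha=1$ and $\beta=0$, the constant is $a=2\alpha+\beta+1=3$. Substituting into the bound $49a^2k^2+30ak$ of \cref{thm:treewidth} gives $49\cdot 9\,k^2+30\cdot 3\,k=441k^2+90k$, which is the claimed bound.

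Behind this, \cref{lem:ball} gives that a shortest path meets any $r$-ball in at most $2r+1\leq 3r$ vertices for $r\geq 1$. Hence $\mg_G(r)\leq 3kr$, and \cref{thm:growth} with $c=3k$ yields the bound. One could hope to sharpen $a$, for instance to use $2r+1$ directly, but the statement only asks for the constant $a=3$. No step poses a real obstacle; the only care needed is the arithmetic $49\cdot 9=441$ and $30\cdot 3=90$.
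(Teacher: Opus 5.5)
Your proposal is correct and matches the paper's proof: you specialize \cref{thm:treewidth} to $\alpha=1$, $\beta=0$, so that $a=3$, and then $49\cdot 9=441$ and $30\cdot 3=90$. Your remark that this amounts to $\mg_G(r)\leq 3kr$ and applying \cref{thm:growth} with $c=3k$ is also accurate.
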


\begin{proof}
A shortest path is $(1,0)$-quasi-shortest, so we can apply
\cref{thm:treewidth} with $a=3$.
\end{proof}

We next show how to apply the same proof idea to walks.

\begin{theorem}\label{thm:walk-treewidth}
Let $\alpha\geq1$ and $\beta\geq0$, and set $
 a=2\alpha+\beta+1$. 
 If $V(G)$ is covered by $k$
$(\alpha,\beta)$-quasi-shortest walks, then $\tw(G)\leq49a^2k^2+30ak.$
\end{theorem}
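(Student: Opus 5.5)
The plan mirrors the proof of \cref{thm:treewidth}: bound the metric growth of $G$ by $akr$ and then invoke \cref{thm:growth} with $c=ak$. The only ingredient that must change is \cref{lem:ball}, since a walk may revisit vertices and its occurrences are no longer in bijection with its distinct vertices. So I first prove a walk version: if $W=w_0e_1\cdots e_tw_t$ is an $(\alpha,\beta)$-quasi-shortest walk, then for every $z\in V(G)$ and $r\geq 0$, the number of \emph{distinct} vertices of $W$ lying in $B_G(z,r)$ is at most $2\alpha r+\beta+1$.

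To prove this, assume some occurrence lies in the ball. Let $a$ be the smallest index and $b$ the largest index with $w_a,w_b\in B_G(z,r)$. Quasi-shortness applied to the subwalk $W[a,b]$, together with the triangle inequality through $z$, gives
$b-a\leq\alpha\,\dist_G(w_a,w_b)+\beta\leq 2\alpha r+\beta$.
Every occurrence of a vertex of the ball has index in $[a,b]$, so every distinct vertex of $W$ in $B_G(z,r)$ appears among $w_a,\ldots,w_b$. These are $b-a+1\leq 2\alpha r+\beta+1$ occurrences, and the number of distinct vertices among them is at most this. Hence the bound holds even with repetitions; the argument of \cref{lem:ball} never needed simplicity, only that we count occurrences in a contiguous window.

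Next, let $H\subseteq G$ be connected with $\rad(H)\leq r$ for an integer $r\geq1$, and let $z$ be a center of $H$. Then $V(H)\subseteq B_G(z,r)$. If walks $W_1,\ldots,W_k$ cover $V(G)$ (as sets of distinct visited vertices), summing the walk lemma over $i$ gives
$|V(H)|\leq k(2\alpha r+\beta+1)\leq akr$,
where the last step uses $r\geq1$. Thus $\mg_G(r)\leq akr$, and \cref{thm:growth} with $c=ak\geq1$ yields $\tw(G)\leq 49a^2k^2+30ak$.

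I do not expect a real obstacle here. The one point to check carefully is that the contiguous-window argument counts vertex occurrences rather than distinct vertices, so repetitions can only help the bound.
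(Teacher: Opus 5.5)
Your proof is correct and follows essentially the same route as the paper: take the first and last occurrences of the walk inside $B_G(z,r)$, bound the window length by $2\alpha r+\beta$ via quasi-shortness and the triangle inequality, note that every distinct ball vertex of the walk lies in that window, and then rerun the argument of \cref{thm:treewidth} to get $\mg_G(r)\leq akr$ and apply \cref{thm:growth} with $c=ak$. The only cosmetic issue is that you reuse the letter $a$ for an index while $a=2\alpha+\beta+1$ is already fixed in the statement, so renaming the indices would avoid confusion.
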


\begin{proof}
We extend the proof of \cref{lem:ball} and \Cref{thm:treewidth}. Let $z\in V(G)$ and $r\geq 0$ and consider one of the $k$ covering walks $W$. Take the first and last occurrences
of vertices of $W$ in $B_G(z,r)$ and let $s$ be the length of the subwalk between them.
Every distinct vertex of $V(W)\cap B_G(z,r)$ occurs in this interval.  If
$x,y$ are the endpoint vertices, then as in \cref{lem:ball}, $s\leq\alpha\cdot\dist_G(x,y)+\beta\leq2\alpha r+\beta$.
Thus, the bound of \cref{lem:ball} holds for the distinct vertices of every
covering walk. Repeating the proof of \cref{thm:treewidth} gives
$\mg_G(r)\leq akr$ and hence the claimed bound by \cref{thm:growth}.
\end{proof}

We can also use \cref{thm:growth-minor} to obtain a linear upper bound for graphs excluding a fixed minor, in particular planar graphs.

\begin{corollary}\label{cor:minor-closed-treewidth}
Let $H$ be a fixed graph, $\alpha\geq1$, $\beta\geq0$, and $a=2\alpha+\beta+1$. 
If an
$H$-minor-free graph $G$ is vertex-covered by $k$
$(\alpha,\beta)$-quasi-shortest paths or walks, then
$\tw(G)=O_H(ak)$. In particular, if $G$ is planar and is vertex-covered by $k$ exact shortest paths, then $\tw(G)\leq 36k+1$. 
\end{corollary}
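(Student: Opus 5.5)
The plan is to repeat the metric-growth argument from the proof of \cref{thm:treewidth} and \cref{thm:walk-treewidth}, and then to invoke \cref{thm:growth-minor} in place of \cref{thm:growth}.

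Let $H$ be fixed and let $G$ be $H$-minor-free and vertex-covered by $k$ $(\alpha,\beta)$-quasi-shortest paths or walks. Take any connected subgraph $H'\subseteq G$ of radius at most $r\geq1$, with center $z$. Then $V(H')\subseteq B_G(z,r)$. By \cref{lem:ball}, or by its walk version established in the proof of \cref{thm:walk-treewidth}, each covering path or walk meets $B_G(z,r)$ in at most $2\alpha r+\beta+1\leq ar$ distinct vertices; this step uses $r\geq1$ together with $\alpha\ge1$. Summing over the $k$ covering objects gives $\mg_G(r)\leq akr$ for every integer $r\geq1$.

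\Cref{thm:growth-minor} requires an integer $c>1$, so I set $c=\lceil ak\rceil$. This satisfies $c\geq a\geq3>1$ and $c\leq ak+1\le 2ak$. Then $\mg_G(r)\leq cr$ for all $r\ge1$, and \cref{thm:growth-minor} gives $\tw(G)=O_H(c)=O_H(ak)$.

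For the planar statement, exact shortest paths are $(1,0)$-quasi-shortest, so $a=3$ and $c=3k$ is already an integer. It is at least $3>1$, so the planar case of \cref{thm:growth-minor} applies and gives $\tw(G)\leq 12\cdot 3k+1=36k+1$.

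The only point needing care is that the hypotheses of \cref{thm:growth-minor} hold exactly: the growth constant must be an integer greater than $1$, and the bound must hold for every integer $r\geq1$. The rounding to $\lceil ak\rceil$ handles the first requirement, and the bound $\mg_G(r)\leq akr$ was derived for every $r\geq1$. This is bookkeeping rather than a real obstacle. Everything else is a direct transfer of the earlier proofs, since the ball bound does not depend on minor-freeness.
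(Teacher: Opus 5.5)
Your proposal is correct and follows the same route as the paper: bound $\mg_G(r)\le akr$ via \cref{lem:ball} (or its walk version from \cref{thm:walk-treewidth}), then apply \cref{thm:growth-minor}, with $c=3k$ in the planar case giving $36k+1$. Your choice $c=\lceil ak\rceil$ is a sensible extra step that makes the integer hypothesis $c>1$ of \cref{thm:growth-minor} hold literally, which the paper passes over (one small nit: the inequality $2\alpha r+\beta+1\le ar$ uses $\beta\ge0$ and $r\ge1$, not $\alpha\ge1$).
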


\begin{proof}
The proofs of \cref{thm:treewidth,thm:walk-treewidth} give
$\mg_G(r)\leq akr$ and $\mg_G(r)\leq 3kr$ for exact shortest paths. Thus, the bounds follow immediately from \cref{thm:growth-minor}.
\end{proof}

\section{Polynomial upper bound for pathwidth when covering the vertices}\label{sec:vertex-upper-pathwidth}

In this section, we improve the existing upper bound for pathwidth when shortest paths cover the vertex set of the input graph $G$, deriving the first polynomial upper bound. Note that \cref{thm:growth} cannot extend to pathwidth~\cite[Section 5]{CampbellDGGHHMW2023}, so the method used for proving \cref{thm:treewidth} cannot be used to directly obtain a similar bound for pathwidth. We combine
\cref{cor:exact-treewidth} with an extension of the construction of
Dumas et al.~\cite[Section~4]{DumasFPT2024}. The results in this section apply only to shortest paths and not to quasi-shortest ones.

A \emph{layering} of a graph $G$ is a function
$\lambda:V(G)\to\mathbb Z$ such that
$|\lambda(u)-\lambda(v)|\leq1$ for every edge $uv$, and
its \emph{layers} are the sets $\lambda^{-1}(i)$. For example, the distances to a fixed root vertex of $G$ forms a layering of $G$.

We will also need the notion of \emph{well-colored paths} from~\cite{DumasFPT2024}. Given a graph $G$ whose vertices are covered by $k$ shortest paths, these covering paths are called the \emph{base paths} of $G$ in~\cite{DumasFPT2024}. We orient these base paths from one arbitrary endpoint to the other, and assign  color $i$ to the vertices of the $i$-th base path. Let $\colors(v)$ be the set of colors assigned in this way to vertex $v$.\footnote{In this paper, we apply these definitions to the case where the paths are vertex-disjoint, and thus $\colors(v)$ will be of size~1. However, \Cref{lem:dumas-endpoint-compression} does not require this, so we state it in its general form.} A \emph{coloring} of an arbitrary path $P$ of $G$ is the coloring of the vertices of $P$ obtained by choosing, for every vertex $v$ of $P$, any color in $\colors(v)$. A colored path $P$ is \emph{well-colored} if for each color that occurs in $P$, the vertices of $P$ of this color form a connected monochromatic subpath of $P$. The well-coloring lemma \cite[Lemma 4.1]{DumasFPT2024} ensures the existence of a well-colored, shortest $a$--$b$ path for every $a, b\in V(G)$.

As one of the basic tools of this section, we argue that a slight extension of a construction used to prove \cite[Lemma~4.2]{DumasFPT2024} also yields the following lemma. 

\begin{lemma}\label{lem:dumas-endpoint-compression}
Let $G$ be a connected graph whose vertices are covered by $k$
shortest paths, and let $a\in V(G)$. There exists a family
$\mathcal P$ of $O(3^k)$ well-colored paths starting at $a$ such that every vertex in $V(G)$ belongs to a path in
$\mathcal P$. Moreover, every path $P\in\mathcal P$ ending at a vertex $b$ satisfies $|E(P)|\leq\dist_G(a,b)+2k$.
\end{lemma}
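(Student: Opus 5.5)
The plan is to encode each vertex $b$ by a \emph{color pattern} and a bounded amount of extra data, and to show that all vertices sharing the same encoding lie on one common well-colored path from $a$. Since the number of encodings will be $O(3^k)$, this gives the family $\mathcal P$.

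The starting point is the well-coloring lemma \cite[Lemma~4.1]{DumasFPT2024}. For every $b\in V(G)$ it yields a well-colored shortest $a$--$b$ path $Q_b$. Such a path splits into at most $k$ maximal monochromatic segments, each a subpath of some base path $P_i$, and each colour appears at most once. So $Q_b$ is described by an ordered sequence of distinct colours $i_1,\dots,i_m$. For each segment we also record whether it traverses $P_{i_j}$ forwards, backwards, or is a single vertex; this is the three-way choice. Following \cite[Lemma~4.2]{DumasFPT2024}, the number of such ``types'' is $\sum_m \binom{k}{m} m!\,3^m$. That is not $O(3^k)$ as stated, so the counting has to be refined as in their proof. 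Their argument charges a colour to each base path with three states ($P_i$ unused, traversed forwards, traversed backwards), and the order is forced by the distances from $a$. I will reproduce that charging rather than insist on the crude bound.

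For a fixed type $\tau$, I need one path covering all $b$ of type $\tau$. Take $b$ of type $\tau$ that is farthest from $a$, and let $P_\tau=Q_b$. For any other $b'$ of type $\tau$, the last segment of $Q_{b'}$ lies on the same base path $P_{i_m}$ as the last segment of $Q_b$, and in the same direction. I would like $b'$ to lie on $P_\tau$ itself. That need not hold, since the entry points into $P_{i_m}$ can differ.

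This is where the additive $2k$ slack is spent. I will modify $P_\tau$ segment by segment so that it passes through the relevant vertices, using the fact that any subpath of a base path is shortest. Concretely, I would define the covering path for $\tau$ inductively: follow the segments of type $\tau$, but on each base path $P_{i_j}$ extend the segment to the extreme entry or exit vertex used by any member of the type. Each such extension is a detour, and I aim to show it costs at most $2$ per colour by the triangle inequality along shortest paths. The resulting path stays well-colored because it is still monochromatic per colour in the same order. It must, however, be checked to be simple, shortcutting repetitions if necessary; shortcutting only shortens the path and keeps it well-colored.

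The length bound then follows, because each of the at most $k$ segments contributes at most $2$ beyond distance. For a vertex $b$ on the path, the prefix ending at $b$ has length at most $\dist_G(a,b)+2k$. I must also make every path in $\mathcal P$ end at a point where this bound holds. The cleanest fix is to let $\mathcal P$ contain the prefixes ending at each ``last'' vertex; I expect the exact accounting of $\mathcal P$'s size (without multiplying by $n$) to require ending each path at the farthest member of its type.

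The main obstacle is the per-colour detour bound of $2$. I would first try comparing the entry points of two shortest well-colored paths of the same type into $P_{i_j}$, using that both reach the segment via shortest paths from $a$, so their entry vertices have distances from $a$ differing by at most their distance along $P_{i_j}$. If that fails, I would fall back on directly reusing the construction of \cite[Lemma~4.2]{DumasFPT2024} and observe that their paths already satisfy the $+2k$ bound.
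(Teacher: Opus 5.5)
\textbf{Main route: the per-type merge fails.} Vertices of the same type cannot in general be put on one path satisfying the $+2k$ bound. The triangle-inequality comparison of entry points you propose cannot fix this: two same-type paths may already diverge on an earlier base path, and after that their entry points into the next base path are unrelated.

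For example, fix $D\geq 2$ and take the base paths $P_1=a\,f\,p_1\cdots p_D$ and $P_2=e\,b_1\,s_1\cdots s_D\,e'\,b_2$, together with the edges $p_De$ and $fe'$. Both base paths are shortest; for $P_2$, the two $e$--$e'$ routes both have length $D+2$. Then $a\,f\,e'\,b_2$ and $a\,f\,p_1\cdots p_D\,e\,b_1$ are well-colored shortest paths, both of type ($1$ forward, $2$ forward), and the well-coloring lemma may return them as $Q_{b_2}$ and $Q_{b_1}$. Since $b_2$ is a leaf, a path from $a$ through both vertices must end at $b_2$. Its length is therefore at least $\dist_G(a,b_1)+\dist_G(b_1,b_2)=(D+3)+(D+2)$, whereas the allowed length is $\dist_G(a,b_2)+2k=7$. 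Your extended path $a\,f\,p_1\cdots p_D\,e\,b_1\,s_1\cdots s_D\,e'\,b_2$ has exactly this excess $2D+2$.

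\textbf{The counting is also missing.} The number of types is factorial in $k$, and the claim that the color order is forced by the distances from $a$ is unsupported.

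\textbf{The missing idea.} A bounded detour per color becomes possible when the paths being merged share the entire prefix built so far. Their next vertices are then all neighbours of one vertex $v$, so they lie within distance $2$ of each other along the shortest base path they enter. Rerouting them all to a common entry vertex therefore costs at most $2$. This forces a recursive refinement: group by the color and orientation following the common prefix, canonize, extend the prefix, and recurse. That is the $k$-labeled branching tree of Dumas et al. Its height is at most $k$, and its $O(3^k)$ leaf bound comes from their Lemma~3.9 and Claim~4.5, not from counting color patterns.

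\textbf{Your fallback.} This is what the paper actually does, but applying their Lemma~4.2 layer by layer would give $O(3^k)$ paths per distance layer, not $O(3^k)$ in total. The paper instead runs the construction once, starting from a well-colored shortest $a$--$b$ path for every $b\neq a$. It observes that nothing in the construction uses the common-layer hypothesis: the invariant $|E(P)|\leq\dist_G(a,b)+2d$ of their Claim~4.3 is applied path by path with $D=\dist_G(a,b)$. It then uses that canonization preserves endpoints, so every $b$ lies on the representative path of some leaf, and takes $\mathcal P$ to be these representative paths.
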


\begin{proof}
We first give an outline of Dumas et al.'s construction~\cite[Section 4]{DumasFPT2024}, and at the end, we explain how we modify their approach. The authors provide a recursive encoding of a set of well-colored paths. The goal of this recursive encoding is to upper-bound the size of the final set by a single-exponential function in $k$ (a simpler counting argument yields an upper bound that is factorial in $k$~\cite[Section 4]{DBLP:conf/isaac/DumasF0T22}). 
The set of paths initially contains, for each vertex $b$ of the layer $\lambda_a^{-1}(i)$ (with $\lambda_a = d_G(a,\cdot)$), a well-colored shortest $a$--$b$ path. At the end of the construction, the set will contain a well-colored $a$--$b$ path, that is not necessarily shortest, but $(1,2k)$-quasi-shortest.

Informally, they define a tree structure, called \emph{$k$-labeled branching tree} \cite[Definition 3.5]{DumasFPT2024}, whose nodes represent sets of paths. The following invariant is preserved throughout the construction: each node $x$ is associated to a well-colored \emph{representative path} $P_x$, defined as the longest path associated to $x$, and such that any other path associated to $x$ is either included in $P_x$, or admits $P_x$ as a prefix. This representative path is defined 
before Step 2.1 of \cite[Lemma 4.2]{DumasFPT2024}.

For any internal node $x$, every path is copied to a child, either unchanged or after undergoing a \emph{canonization} operation that preserves its endpoints (Step 2.3 in~\cite[Lemma 4.2]{DumasFPT2024}). All paths that share the same color $c$ and orientation after their current representative path prefix are grouped together in a child node $x'$ of $x$, and the representative path of $x'$ is obtained by appending to $P_x$ a subpath of the base path whose color is $c$. An additional operation is carried out on the root node, in order to define the representative path of its child nodes (see Steps 1.1 to 1.3 in \cite[Lemma4.2]{DumasFPT2024}). Since all representative paths are well-colored, and each split fixes a new color segment in the representative path, the height of the $k$-labeled branching tree is at most $k$ \cite[Claim 4.4]{DumasFPT2024}. This splitting stops when all paths of a node are included in its representative path. In this case, the considered node is a leaf of the $k$-labeled branching tree.

Furthermore, at depth $d$, the construction maintains the invariant
\begin{equation}
    |E(P)|\leq\dist_G(a,b)+2d\label{eq-invariant}
\end{equation}
for every current $a$--$b$ path $P$ \cite[Claim~4.3]{DumasFPT2024}.\footnote{The claim proves this invariant for vertices at a fixed distance $D$ from $a$. When considering an $a$--$b$ path, we use $D=d_G(a,b)$.} In other words, any $a$-$b$ path associated to a node of depth $d$ is a well-colored $(1,2d)$-quasi-shortest $a$--$b$ path.
The authors finally prove that the number of leaves of a $k$-labeled branching tree is $O(3^k)$ in \cite[Lemma~3.9 and Claim~4.5]{DumasFPT2024}. Note that this whole construction can be started with any set of well-colored shortest paths, all starting at a common vertex $a$.

\medskip

We next prove the lemma statement by applying the same construction to a different starting set of paths. Consider the set $\mathcal Q$ containing, for any vertex $b\in V(G)-a$, a well-colored shortest $a$--$b$ path. (In~\cite{DumasFPT2024}, only paths to vertices of a fixed distance layer were considered.) Recall that the existence of such paths is justified in \cite[Lemma 4.1]{DumasFPT2024}. We create a root node, to which we associate $\mathcal Q$, and we apply the previous construction. By construction, a path towards each vertex $b\in V(G)-a$ is included in one of the representative paths of a leaf of the obtained $k$-labeled branching tree, as the construction preserves path endpoints. Thus, by defining $\mathcal P$ as the set of all representative paths of leaves of the $k$-labeled branching tree, the (endpoints of) paths of $\mathcal P$ cover the vertices of $G$. The bound on the number of leaves yields $|\mathcal P| = O(3^k)$, and the maintained invariant (\ref{eq-invariant}) ensures that $|E(P)|\leq\dist_G(a,b) + 2k$ for all $a$--$b$ paths of $\mathcal P$.
\end{proof}

We next show how to apply \Cref{lem:dumas-endpoint-compression} to a \emph{partition} into $k$ shortest paths, to obtain another path partition with desirable properties.\footnote{Note that \cref{lem:vertex-thin-partition} does not hold when the partition is replaced by a covering: consider for example the graph covered by two paths of the same length, that share every second vertex along them.}

\begin{lemma}\label{lem:vertex-thin-partition}
Let $G$ be a connected graph whose vertices are partitioned into $k$
shortest paths. For every vertex $a\in V(G)$, there is a partition of $V(G)$
into $O(k3^k)$ paths, each containing at most one
vertex at any given distance from $a$.
\end{lemma}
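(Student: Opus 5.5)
Apply \Cref{lem:dumas-endpoint-compression} to $G$ and $a$. This gives a family $\mathcal P$ of $O(3^k)$ well-colored paths starting at $a$ that cover $V(G)$. Each $a$--$b$ path $P\in\mathcal P$ satisfies $|E(P)|\le\dist_G(a,b)+2k$. Since the base paths now form a partition, every vertex has exactly one color. Well-coloredness then means that each $P\in\mathcal P$ splits into at most $k$ maximal monochromatic segments, one per color at most. Each such segment is a contiguous subpath of $P$ lying inside the single base path of that color. I would cut every $P\in\mathcal P$ into these segments, giving a family of at most $k|\mathcal P|=O(k3^k)$ pieces that still covers $V(G)$. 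It then remains to make each piece hit every distance layer $\lambda_a^{-1}(i)$ at most once, and to turn the cover into a partition.

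The key step is to show that a monochromatic segment of color $c$ can be replaced by a subpath of the base path $Q_c$ with the same vertex set. I would check three facts.

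\begin{itemize}
\item The segment consists of vertices of color $c$ only, and each vertex lies on exactly one base path. So the segment's consecutive vertices are vertices of $Q_c$ adjacent in $G$.
\item Two vertices of a shortest path $Q_c$ that are adjacent in $G$ must be consecutive on $Q_c$. Hence the segment walks along $Q_c$, and being a path, it must be a contiguous interval of $Q_c$ traversed monotonically.
\item An interval of a shortest path meets each distance layer from $a$ in at most... not necessarily one vertex. A shortest path can go down and then up in $\lambda_a$.
\end{itemize}

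So a further split is needed. Along a subpath $S$ of the shortest path $Q_c$, the function $\lambda_a$ changes by at most $1$ per step. I would split $S$ into maximal pieces on which $\lambda_a$ is strictly monotone, with flat steps treated as breakpoints. The main obstacle is bounding the number of such pieces by a constant, or by $O(1)$ per segment.

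Here I would use the near-geodesicity of $P$. The invariant $|E(P)|\le \dist_G(a,b)+2k$ says that along $P$, $\lambda_a$ increases at every step except at most $2k$ "defect" steps. A non-increasing step costs at least $1$ of the slack, and a decreasing step costs $2$. Summed over all segments of $P$, this gives only $O(k)$ breakpoints. Splitting every $P$ at these defect steps yields $O(k)\cdot O(3^k)=O(k3^k)$ pieces in total. On each piece, $\lambda_a$ is strictly increasing, so each piece contains at most one vertex per layer. The color segmentation is then not even needed for the count; it serves only as the structural justification.

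Finally, I would convert the cover into a partition. Each piece is a path that is strictly increasing in $\lambda_a$. Deleting from a later piece the vertices already covered by earlier pieces breaks it into subpaths, which could increase the count. To avoid this, I would remove vertices more carefully: restrict each piece $P'$ to the vertices not covered by any earlier piece. The remaining vertices occupy distinct layers, but they might not be consecutive, so they need not form a path. This is where the partition hypothesis should be used. I would try to route the leftover vertices back along the base paths. Failing that, I would simply accept subpaths of the pieces and bound the number of fragments. This last bookkeeping step, keeping the count at $O(k3^k)$ while making pieces disjoint, is the part I expect to require the most care.
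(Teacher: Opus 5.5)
Your first two steps match the paper's proof. Each path of $\mathcal P$ is cut at its at most $k-1$ color changes and at its at most $2k$ steps where $\dist_G(a,\cdot)$ fails to increase. This gives at most $3k$ pieces per path, each monochromatic and strictly increasing in distance from $a$, hence $O(k3^k)$ pieces covering $V(G)$.

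\textbf{The gap is the final step.} You leave the conversion of this cover into a partition open. This is exactly where the partition hypothesis is needed, since the statement fails for covers. Your suggestions, ``route leftover vertices back along the base paths'' or ``accept fragments and bound them'', are not carried out, and no bound on the fragments is given.

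The missing observation is one you already proved in your second bullet: every piece is a contiguous interval of the single base path of its color. Since the base paths are pairwise vertex-disjoint, two pieces can overlap only if they are intervals of the same base path $B$. Intervals of $B$ can be made disjoint without fragmentation:
\begin{itemize}
\item discard any piece whose vertex set is contained in that of another;
\item if two pieces $I,J$ on $B$ overlap and neither contains the other, then $V(I)\setminus V(J)$ is a single end-interval of $I$, so replace $I$ by it.
\end{itemize}
Equivalently, order the pieces on $B$ by their first vertex along $B$; each piece minus the union of the earlier ones is then a suffix of it.

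This trimming has three properties you need:
\begin{itemize}
\item it never increases the number of pieces;
\item every trimmed piece is still a subpath of $B$, hence a path of $G$;
\item every trimmed piece is a subpath of a piece along which the distance from $a$ strictly increases, so it still meets each distance layer at most once.
\end{itemize}
This gives the required partition into $O(k3^k)$ paths.

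So the color segmentation is not merely structural justification, as you suggest. Without it, a piece could straddle several base paths, overlaps would no longer be intervals of a single path, and the trimming argument would fail.
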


\begin{proof}
By \cref{lem:dumas-endpoint-compression}, there is a set $\mathcal P$ of $O(3^k)$ well-colored paths starting at $a$ that cover $V(G)$, each having length at most
the distance between its endpoints plus $2k$.

For each path $P\in \mathcal P$, consider the edges of $P$ sequentially, from $a$ towards the other endpoint of $P$. Let $xy$ be an edge of $P$ such that $x$ is closer to $a$ than $y$ on $P$. Cut $P$ at edge $xy$ if one of the following two cases occur: (i) if $x$ and $y$ have different colors in the coloring of $P$, and (ii) if $\dist(a,y)\leq \dist(a,x)$, that is, the distance to $a$ fails to increase along $xy$. There are at most $k-1$ color changes in $P$, since it is well-colored.
Moreover, the distance from $a$ changes by at most~1 along each edge.
An edge on which this distance stays constant contributes~1 to the
length excess, and an edge on which it decreases contributes~2, thus there are at
most $2k$ such edges. Hence, $P$ is cut at most $3k-1$ times, producing at most $3k$ subpaths. Each subpath is monochromatic, as we have cut it along all edges whose endpoints have different colors. The distance from $a$ strictly increases along each subpath,
so each piece meets every distance layer at most once. 
Repeating this procedure for every path in $\mathcal P$, let $\mathcal S$ be the family of all obtained subpaths. We have $|\mathcal S|=O(k3^k)$, and
$\mathcal S$ covers $V(G)$ (but may not be a partition). Since each base path is shortest and hence induced, a monochromatic subpath is necessarily a shortest path.

It remains to remove overlaps between the members of $\mathcal S$. Since all paths in $\mathcal S$ are monochromatic and the base paths are pairwise vertex-disjoint, only subpaths of the same base path may overlap. 
Fix such a base path $B$. The members of $\mathcal S$ contained in $B$
are subpaths of $B$ that together cover $V(B)$, but may overlap because they come from different paths supplied by
\cref{lem:dumas-endpoint-compression}. Consider two such overlapping subpaths $P$ and $Q$. If one of them is included in the other, say $V(P)\subseteq V(Q)$, remove $P$ from $\mathcal S$. Otherwise, remove $V(P)\cap V(Q)$ from, say, $P$: this simply shortens $P$. Repeat this until no overlapping paths remain; at the end, we obtain the desired partition of size $O(k3^k)$.
\end{proof}

The next lemma shows how to use a path partition as the one from \Cref{lem:vertex-thin-partition} and a tree decomposition to obtain a path decomposition.

\begin{lemma}\label{lem:vertex-layered-conversion}
Let $G$ be a graph of treewidth at most $w\geq 1$. If $V(G)$ is partitioned into $N\geq1$ paths,
each containing at most one vertex of each layer of a fixed layering $\lambda$, then $\pw(G)=O(w\log(N+1))$.
\end{lemma}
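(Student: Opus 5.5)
We prove $\pw(G)=O(w\log(N+1))$ by recursive balanced separation of the partition into paths, bisecting along the layering.

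\textbf{Step 1: separators.} Fix a tree decomposition of $G$ of width $w$. Put weight on the vertices and take a balanced separator: for any subgraph $G'$ and weights on its vertices, there is a bag $X$ of size at most $w+1$ such that each component of $G'-X$ carries at most half the total weight. We weight by \emph{paths}, not vertices. A single separator $X$ meets at most $w+1$ of the $N$ paths, so deleting it does not directly split the path system. We therefore cut the paths themselves: each path $P$ that meets $X$ is cut at the vertices of $X$, creating at most $|X\cap V(P)|+1$ pieces.

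\textbf{Step 2: recursion on the layering.} We use the standard fact that $\pw(G)\le \tw(G)\cdot O(\log n)$, obtained by recursing on balanced separators; here the role of $n$ should be played by $N$. The key observation is that each path is \enquote{monotone} in $\lambda$. If we arrange the layers along a line and think of a path decomposition sweeping $\lambda$ from low to high, then each path is present in at most one vertex per layer. A sweep in which the bag at layer $i$ contains the current vertex of every path would already give width $O(N)$, because every edge $uv$ has $|\lambda(u)-\lambda(v)|\le1$, so consecutive layers suffice. The goal is to replace this linear dependence on $N$ with $w\log N$.

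The plan is as follows. Recursively find a balanced separator $X$ with respect to the weight function counting paths, each path receiving total weight $1$ spread over its vertices. Then process the components of $G-X$. The paths fully inside a component form a subinstance with at most $N/2$ paths, up to paths that cross $X$, and the crossing paths are charged to $X$, which stays in all bags. The recursion depth is $O(\log(N+1))$, and each level adds $w+1$ vertices to the bags. The base case, a single path $N=1$, has pathwidth $1$.

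\textbf{Main obstacle.} Paths that are cut by $X$ break into pieces, so the subproblems may contain more paths than their weight suggests. In addition, a component might contain fragments of many paths. The layering is exactly what should control this. I would try to argue that within each subproblem we only need to handle a contiguous interval of layers: since a fragment is monotone in $\lambda$, a path split at a vertex of $X$ yields pieces occupying disjoint $\lambda$-intervals. We can then concatenate the path decompositions of the components in order of $\lambda$-intervals, adding $X$ to all bags. If this fails, the fallback is to separate with respect to the layering directly. First choose a middle layer index $m$ such that the paths are balanced. Then take a tree-decomposition separator for the weighted set of at most $N$ vertices lying in layer $m$, which yields a bag of size $w+1$ and a recursion halving $N$. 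The difficulty in both versions is the same: verifying that concatenation preserves the connectivity condition. This is guaranteed as long as every edge between two subproblems has an endpoint in the separators kept in all bags.
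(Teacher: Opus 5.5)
Your skeleton is the same as the paper's: take a centroid bag $X$ of a tree decomposition weighted by paths, cut the paths meeting $X$, recurse on the components of $G-X$, and add $X$ to every bag of the concatenated decompositions. As written, though, the proof does not close.

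The fragments of paths meeting $X$ are not absorbed by $X$. Only their vertices in $X$ sit in every bag; the remaining pieces inside a component $C$ are paths of the subinstance on $C$. Counting correctly:
\begin{itemize}
\item a path avoiding $X$ lies inside one component and carries its full weight there, so $C$ contains at most $N/2$ whole paths;
\item at most $w+1$ paths meet $X$, and cutting them produces at most $\sum_{P\cap X\neq\emptyset}(|V(P)\cap X|+1)\le 2(w+1)$ fragments.
\end{itemize}
Hence the subinstance has $N'\le N/2+2w+2$ paths. This settles your worry that a component might contain fragments of many paths. But it also shows the recursion shrinks $N$ only while $N>4w+4$. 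Your base case $N=1$ is never reached, and nothing in the proposal bounds the pathwidth once $N=\Theta(w)$.

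The missing idea is to stop the recursion there and use the layering as the base case. The paths partition $V(G)$ and each meets every layer at most once, so every layer has at most $N$ vertices. The bags $\lambda^{-1}(i)\cup\lambda^{-1}(i+1)$ then form a path decomposition of width at most $2N-1=O(w)$. This is exactly the sweep you describe in Step 2 and then set aside. With $f(N)\le f(N/2+2w+2)+w+1$ for $N>4w+5$ and $f(N)\le 8w+9$ otherwise, there are $O(\log(N+1))$ levels and $f(N)=O(w\log(N+1))$.

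This base case is the only place the layering is needed; pieces of the paths trivially still meet each layer at most once. Your proposed uses of the layering, ordering components by $\lambda$-intervals or separating at a middle layer, are unnecessary. The concatenation is valid in any order simply because no edge joins two components of $G-X$.
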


\begin{proof}
Let $\mathcal{P}$ be the given path partition of $V(G)$ and call $f(N)$ the maximum possible pathwidth of a graph partitioned into $N$ paths satisfying the hypotheses of the lemma (treewidth at most $w$ and layering condition). We proceed by induction on $N$.

Suppose first that $N\leq 4w+5$. This means that each layer of the considered layering $\lambda$ is of size at most $4w +5$. Let $\alpha$ be the maximum integer such that $\lambda^{-1}(\alpha) \neq\emptyset$, and define for $i\leq \alpha-1$ the bag $B_i = \lambda^{-1}(i)\cup \lambda^{-1}(i+1)$. The decomposition $B_0, B_1, \dots, B_{\alpha-1}$ is a path decomposition of $G$, since any edge of $G$ has its endpoints in the same layer of $\lambda$, or in two consecutive ones. Thus, $f(N)\leq 8w+9$.

Suppose now that $N> 4w+5$, and let $(T, \mathcal B)$ be a tree decomposition of $G$ of width at most $w$. For each path $P_i$ of $\mathcal P$, we define $T_i$ as the subgraph of $T$ containing all nodes whose bags contain vertices of $P_i$. Let $\mathcal T=\{T_i\mid 1\leq i \leq N\}$. Since $P_i$ is connected, $T_i$ is connected as well. Select, for each $P_i$, a node $x_i$ in $T_i$. Assign to any node $y$ the weight $w(y) = |\{i\mid x_i = y\}|$. The total weight of $T$ is thus $N$. Consider a weighted centroid $c_T$ of $T$, that is, a node of $T$ such that each component of $T-c_T$ has weight at most $\frac N 2$. Call one of these components $C$. If some subtree $T_i$ intersects $C$, since $T_i$ is connected, it is either entirely included in $C$, or $c_T$ is a node of $T_i$. Call $G_C$ the subgraph of $G$ induced by the vertex set $\left(\bigcup_{x\in V(C)}B_x\right)\setminus B_{c_T}$. 

Since $\mathcal P$ is a partition, at most $w+1$ of the subtrees of $\mathcal T$ contain $c_T$ (thus, at most $w+1$ paths of $\mathcal P$ can intersect $B_{c_T}$). Moreover, since the weight of $C$ is at most $\frac N 2$, at most $\frac N 2$ subtrees of $\mathcal T$ are entirely contained in $C$ (they correspond to at most $\frac N 2$ paths of $\mathcal{P}$). Thus, there are at most $\frac N 2 + w + 1$ paths of $\mathcal P$ that intersect $G_C$. We consider the collection $\mathcal P'$ of paths obtained from the intersection of the paths of $\mathcal P$ with $V(G_C)$. Note that a path of $\mathcal P$, if it intersects $B_{c_T}$, may correspond to several paths in $\mathcal P'$. However, each vertex of $B_{c_T}$ may only increase the number of paths in $\mathcal P'$ by~1. Thus, $|\mathcal P'|=N'\leq \frac N 2 + 2w+2$. 
The paths in $\mathcal P'$ also meet each layer of $\lambda$ (when restricted to $G_C$) at most once, and $\lambda$ is still a layering of $G_C$. Because $N> 4w+4$, $N'<N$, and we can consider by induction a path decomposition of $G_C$ of width at most $f(N')\leq f\left(\frac N 2 + 2w+2\right)$.

We do this for each component of $T-c_T$. We order these components arbitrarily and concatenate the path decompositions following this ordering. Then, we add to each bag, the vertices of $B_{c_T}$. It is clear that this forms a valid path decomposition of $G$, of width at most

$$f(N) \leq f\left(\frac N 2+2w+2\right) + w+1.$$

Apply this recursive identity $d=\ceil{\log_2(N+1)}$ times to get

$$f(N) \leq f\left(4w+5 - \frac{2w+2}{2^d}\right) + (w+1)d,$$
which gives $f(N) = O(w\log(N+1))$, by using that $f(N) \leq 8w+9$ when $N\leq 4w+5$.
\end{proof}

We are now ready to assemble the pieces to obtain the main result of this section.

\begin{theorem}\label{thm:vertex-upper}
For any graph $G$, if $V(G)$ is covered by $k$ shortest paths, then $\pw(G)=O(k^3)$. 
\end{theorem}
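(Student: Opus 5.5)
We reduce to the setting of \cref{lem:vertex-thin-partition} and then apply \cref{lem:vertex-layered-conversion}. First we make three reductions. Pathwidth of a disconnected graph is the maximum over its components. Each component is still vertex-covered by at most $k$ shortest paths of $G$: a shortest path lies in one component and is shortest there, and each component meets some of them. So we may assume $G$ is connected.

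Next we turn the cover into a \emph{partition}. Given shortest paths $P_1,\dots,P_k$ covering $V(G)$, set $P_i' = P_i \setminus (V(P_1)\cup\dots\cup V(P_{i-1}))$. This set is generally not a subpath, so it may break into several pieces. To avoid this, we proceed greedily: we replace the cover by one where the vertices are partitioned into subpaths of the $P_i$. The number of pieces can grow, so we instead argue more carefully. Each $P_i$ minus the union of earlier paths can split into many pieces, which would destroy the $O(k)$ count. The alternative is to avoid partitions altogether and work with the Dumas et al.\ compression directly. We do this as follows.

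Fix a vertex $a$ and the layering $\lambda=\dist_G(a,\cdot)$. Apply \cref{lem:dumas-endpoint-compression} to get $O(3^k)$ well-colored $(1,2k)$-quasi-shortest paths from $a$ covering $V(G)$. Cut them as in the proof of \cref{lem:vertex-thin-partition} at color changes and at non-increasing edges, obtaining a family $\mathcal S$ of $O(k3^k)$ monochromatic subpaths of base paths, each meeting every layer at most once. To make $\mathcal S$ a partition, assign each vertex $v$ to a single color, say the smallest in $\colors(v)$. Then restrict each piece to the vertices whose assigned color equals the piece's color. This is the delicate step: the restriction may fragment a piece. If fragmentation cannot be controlled, we argue differently. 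We note that \cref{lem:vertex-layered-conversion} only uses the partition property to bound by $w+1$ the number of paths meeting the centroid bag, and the same argument goes through for a family in which every vertex lies in at most $m$ paths, losing a factor $m$. Alternatively, we pass to an auxiliary graph where overlapping vertices are duplicated. Here $m\le O(3^k)$ is too large, so the partition route is preferred: process base paths in order and keep, for each vertex, only its first color. The first route remains the one I would try first.

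With a partition into $N=O(k3^k)$ paths, each meeting every layer of $\lambda$ at most once, \cref{cor:exact-treewidth} gives $w=O(k^2)$. Then \cref{lem:vertex-layered-conversion} yields $\pw(G)=O(k^2\log(k3^k))=O(k^3)$. The final arithmetic is immediate. The main obstacle is the step converting the covering family into a genuine partition with only $2^{O(k)}$ paths that still meets each layer at most once. I expect this to be handled by reducing first to the case where the $k$ base paths are vertex-disjoint, which \cref{lem:vertex-thin-partition} requires. One would shorten the base paths so they become disjoint subpaths, accepting that each $P_i$ fragments into pieces. One then bounds the number of resulting shortest paths by a function like $2^{O(k)}$ or $\mathrm{poly}(k)$ after suitable choices. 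Any such bound of the form $2^{O(k)}$ suffices, since it only enters through $\log N$.
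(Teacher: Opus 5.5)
There is a genuine gap, and it is exactly the step you flag as the main obstacle. You never produce a partition, or any family with controlled overlap, to which \cref{lem:vertex-layered-conversion} applies. The resolutions you sketch (trimming the base paths of $G$ into disjoint pieces, or keeping only the first color of each vertex, and hoping for $2^{O(k)}$ pieces) cannot work inside $G$ at all. Take the example from the footnote to \cref{lem:vertex-thin-partition}: vertices $s_0,\dots,s_m$ and, for each $j<m$, two vertices $x_j,y_j$ adjacent to $s_j$ and $s_{j+1}$. This graph is covered by the two shortest paths $s_0x_0s_1x_1\cdots s_m$ and $s_0y_0s_1y_1\cdots s_m$. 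It is bipartite with sides $\{s_j\}$ and $\{x_j,y_j\}$, so each path contains at most one more vertex of the second side than of the first. Summing over a partition into $p$ paths gives $2m\le(m+1)+p$. So already for $k=2$, every partition of $V(G)$ into paths of any kind has $\Omega(n)$ members.

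The missing idea, which is the paper's proof, is to make the cover disjoint in an auxiliary graph, duplicating according to the $k$ base paths rather than the $O(3^k)$ compressed paths. Replace each $v$ by a clique $F_v=\{(v,i): v\in V(P_i)\}$ and join $F_u$ completely to $F_v$ for every $uv\in E(G)$; call the result $H$. Distances between copies of distinct vertices equal distances in $G$, so the lifts of $P_1,\dots,P_k$ are vertex-disjoint shortest paths partitioning $V(H)$. Choosing one copy per vertex gives an induced copy of $G$, so $\pw(G)\le\pw(H)$. Because $H$ is again vertex-covered by $k$ shortest paths, \cref{cor:exact-treewidth} gives $\tw(H)=O(k^2)$ with no loss. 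Now \cref{lem:vertex-thin-partition,lem:vertex-layered-conversion} apply to $H$ verbatim. You dismissed duplication because you pictured it with multiplicity $O(3^k)$; duplicating per base path costs nothing.

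Your multiplicity route could also have been rescued, and again the estimate $m=O(3^k)$ is what went wrong. After cutting, the pieces of color $c$ are paths inside $V(P_c)$. Since $P_c$ is shortest, it is induced, so these pieces are intervals of $P_c$. Trim them into pairwise disjoint intervals with the same union, as in the last step of the proof of \cref{lem:vertex-thin-partition}. This keeps their number $O(k3^k)$ and keeps each one meeting every layer at most once. Afterwards every vertex $v$ lies in at most $|\colors(v)|\le k$ pieces. Redoing the centroid argument with multiplicity $m$ gives $N'\le N/2+2m(w+1)$ and a base case of width $O(mw)$, hence $\pw(G)=O(mw+w\log N)=O(k^3)$ for $m=k$. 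As submitted, though, neither route is carried out, so the proposal does not yet prove the theorem.
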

\begin{proof}
Without loss of generality, consider $G$ to be connected. Let $P_1,\ldots,P_k$ be the shortest paths covering the vertices of $G$. Replace each vertex $v$ by a clique
$F_v=\{(v,i):v\in V(P_i)\}$ and each edge $uv$ by all edges between
$F_u$ and $F_v$. Call the obtained graph $H$. Notice that the paths $P_1,\ldots,P_k$ now partition $V(H)$
and remain shortest paths in $H$. Furthermore, choosing one vertex in each of the created cliques gives an induced copy of $G$ in $H$, so $\pw(G)\leq\pw(H)$.

By \cref{cor:exact-treewidth}, $\tw(H)=O(k^2)$.
\Cref{lem:vertex-thin-partition} applied to $\mathcal P=\{P_1,\ldots,P_k\}$ gives a partition into
$N=O(k3^k)$ paths containing at most one vertex of each distance layer
from a fixed root. The distance from that root is a layering of $H$.
Since $\log(N+1)=O(k)$, \cref{lem:vertex-layered-conversion}
yields $\pw(G)\leq\pw(H)=O(k^3)$.
\end{proof}

\section{Improving the pathwidth upper bound when covering the edges}\label{sec:edge-upper}

We will use the polynomial pathwidth-modulator of
Baste et al.~\cite{BasteDMGOP2026} together with the results in \Cref{sec:treewidth} 
to improve the $O(k^4)$ upper bound for pathwidth from Baste et al.~\cite{BasteDMGOP2026} when the graph is edge-covered by $k$ shortest paths. The results of this section do not apply to quasi-shortest paths. We restate the structural result at the core of the  work in~\cite{BasteDMGOP2026}.

\begin{theorem}[{Baste, De Meyer, Giocanti, Objois, Picavet \cite[Theorem~8]{BasteDMGOP2026}}]\label{thm:baste-pathwidth-components}
    Let $G$ be a graph whose edges are covered by shortest paths $P_1,\ldots,P_k$. For each $i$ with $1\leq i\leq k$, there is a set $X_i$ of at most $720k^3+4k$ vertices of $G$ such that every component of $G-X_i$ meeting $P_i$ has pathwidth at most $6k$.
\end{theorem}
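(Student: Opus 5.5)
Since \cref{thm:baste-pathwidth-components} is quoted from~\cite{BasteDMGOP2026}, this is only a sketch of how I would reconstruct it; the details may differ from the original. The plan is to produce, for the fixed index $i$, a layering whose layers are small inside every component of $G-X_i$ that meets $P_i$. The pathwidth bound then comes from the layer-pair decomposition already used in the base case of \cref{lem:vertex-layered-conversion}. Concretely, let $u$ be an endpoint of $P_i$ and set $\lambda=\dist_G(u,\cdot)$. If every layer of $\lambda$ restricted to a component $C$ of $G-X_i$ has at most $3k$ vertices, then the bags $\lambda^{-1}(t)\cup\lambda^{-1}(t+1)$ (restricted to $C$) form a path decomposition of $C$ of width less than $6k$. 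So the whole task is to find $X_i$ of size $O(k^3)$ after whose removal each covering path $P_j$ contributes at most $3$ vertices per layer to each such component.

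\textbf{Step 1: understand $\lambda$ along each path.} Along $P_i$ itself, $\lambda$ strictly increases, so $P_i$ contributes at most one vertex per layer. For $j\neq i$, the function $t\mapsto\lambda(P_j(t))$ is $1$-Lipschitz along $P_j$, but it may oscillate. I would call a vertex of $P_j$ a \emph{turning point} if $\lambda$ stops increasing or stops decreasing there. Between two consecutive turning points, $P_j$ meets each layer at most once. The goal is then to delete only few vertices so that, inside a component meeting $P_i$, each remaining piece of $P_j$ has $O(1)$ monotone stretches.

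\textbf{Step 2: localise the damage near $P_i$.} The edge-covering hypothesis is what should make this possible: every edge lies on some $P_\ell$, so connectivity inside $G-X_i$ is carried entirely by pieces of the covering paths. I would show that a component of $G-X_i$ meeting $P_i$ can only be reached through pieces of paths that \emph{interact} with $P_i$. To control these, for every ordered pair $(j,\ell)$ I would put into $X_i$ the $O(k)$ extreme vertices where $P_j$ first and last meets $P_\ell$, together with the turning points of $P_j$ that are "visible" from $P_i$. The latter are the points whose $\lambda$-value is attained, up to an additive $O(1)$ slack, by a vertex of $P_i$ at distance $O(1)$. The geodesic property of both $P_i$ and $P_j$, combined with the triangle inequality, should show that a shortest path oscillating in $\lambda$ can stay close to the geodesic $P_i$ only at $O(k)$ places per neighbouring path. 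Summing over the $k$ choices of $j$ and the $k$ of $\ell$ gives $O(k^3)$ deleted vertices; the constant $720$ would come from bookkeeping these cases.

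\textbf{Step 3: count layer sizes.} After deleting $X_i$, each surviving piece of a $P_j$ inside a component meeting $P_i$ should consist of at most three monotone stretches with respect to $\lambda$. The worst case is up–plateau–down around one remaining turning region. Such a piece therefore meets each layer at most $3$ times, so each layer of the component has at most $3k$ vertices, which gives pathwidth at most $6k$; the additive $4k$ in $|X_i|$ accounts for the path endpoints.

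\textbf{Main obstacle.} The hard part is Step 2: proving that only $O(k^2)$ turning points per path need to be deleted. A single shortest path can oscillate in $\lambda$ unboundedly often far from $P_i$, so the argument must show that these far oscillations land in components not meeting $P_i$. This is exactly where edge coverage (rather than vertex coverage) must be used. I would first try a charging argument in which each surviving oscillation near $P_i$ forces two covering paths to cross, and crossings of geodesic pairs are limited to their first and last common vertices.
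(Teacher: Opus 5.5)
The paper gives no proof of this statement. It is quoted from Baste et al.\ and used only as a black box in the proof of Theorem~\ref{thm:edge-upper}, so your sketch has to stand on its own. As written it is not yet a proof, because the whole content of the theorem sits in your Step~2, which you state only as a hope.

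Your framework is sound as far as it goes. $\lambda=\dist_G(u,\cdot)$ is a layering and is strictly increasing along $P_i$. The bags $\lambda^{-1}(t)\cup\lambda^{-1}(t+1)$ give width below $6k$ once every layer of a component has at most $3k$ vertices. The ``near $P_i$'' part is even easier than you suggest. If $a,b\in V(P_j)$ are within distance $c$ of $P_i$, then $|E(P_j[a,b])|=\dist_G(a,b)\leq|\lambda(a)-\lambda(b)|+4c$. Hence $P_j[a,b]$ has at most $4c$ non-increasing steps, and none at all between two common vertices of $P_j$ and $P_i$.

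What is missing is any control over the parts of the covering paths that are \emph{not} close to $P_i$ but still lie in the component of $G-X_i$ meeting $P_i$. Nothing forces this component to be local around $P_i$. It is glued together from pieces of all $k$ paths at shared non-deleted vertices. The stretch of a path between two vertices near $P_i$, or a path attached to it at a non-extreme common vertex, can wander far away. You give no reason why deleting the first/last common vertices of each pair plus the ``visible'' turning points disconnects every far oscillation from $P_i$. Without that, you are essentially claiming that BFS layers from $u$ shrink to size $O(k)$ inside a component that may be almost all of $G-X_i$, after deleting only $O(k^3)$ vertices. The only general control on BFS layers available (Dumas et al.) is exponential in $k$. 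This separation step is the substance of the theorem, and it is absent.

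Two further steps fail as written.

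\textbf{Step~3.} You pass from ``each surviving piece of $P_j$ meets each layer at most $3$ times'' to ``each layer has at most $3k$ vertices''. But $X_i$ may contain many vertices of a single $P_j$ (up to $\Theta(k^2)$ in your own count), cutting it into many pieces. Several of these can lie in the same component with overlapping $\lambda$-ranges, so the per-path bound does not follow. You would need at most one piece per path per component, or pieces with disjoint $\lambda$-ranges, and neither is argued.

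\textbf{The charging premise.} Your proposed charging argument rests on a false premise. Two shortest paths can share arbitrarily many vertices with disjoint segments in between; two interleaved monotone staircases in a grid are an example. So ``crossings'' of a geodesic pair are not confined to their first and last common vertices. What is true, and probably more useful, is this: common vertices of two shortest paths appear in the same order along both, with equal-length segments between consecutive ones. This lets $\lambda$-monotonicity propagate from one path to another across a shared span. Turning that into an $O(k^3)$-size deletion set with per-path (not per-piece) layer control is the real work, and your proposal does not do it.
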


To prove the results in this section, we start from a tree decomposition $(T,\mathcal B)$ of our input graph $G$ given by \Cref{cor:exact-treewidth} or \Cref{cor:minor-closed-treewidth}, and transform it using \Cref{thm:baste-pathwidth-components} and \Cref{lem:binary-tree-minor,lem:pathwidth-leafage-paths} so that the obtained tree $T$ has bounded pathwidth. Then, we use \Cref{lem:semi-tree} below that gives a bound on the pathwidth of $G$ with respect to the pathwidth of $T$. Note that this lemma applies to \emph{semi-tree decompositions}, a generalization of tree decompositions defined in \cite{BodlaenderFomin2002}. In particular, any tree decomposition is a semi-tree decomposition.

\begin{lemma}[Bodlaender and Fomin~{\cite[Lemma~8]{BodlaenderFomin2002}}]
\label{lem:semi-tree}
If a graph $G$ has a semi-tree decomposition $(T, \mathcal B)$ of width $k$, then $\pw(G)\leq (k+1)(\pw(T)+1)-1.$
\end{lemma}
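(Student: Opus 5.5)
The plan is to \emph{lift} an optimal path decomposition of the decomposition tree $T$ to a path decomposition of $G$, replacing each node of $T$ by its bag. Let $p=\pw(T)$ and let $Y_1,\dots,Y_m$ be a path decomposition of $T$ of width $p$, so $|Y_j|\leq p+1$ for all $j$. For each $j$ define
$$Z_j=\bigcup_{t\in Y_j} B_t .$$
Since every bag $B_t$ has at most $k+1$ vertices, $|Z_j|\leq (p+1)(k+1)$. Hence, once $Z_1,\dots,Z_m$ is shown to be a path decomposition of $G$, its width is at most $(k+1)(\pw(T)+1)-1$, which is the claimed bound.

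I will check the three axioms in turn.

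\textbf{Vertex and edge coverage.} Every vertex $v$ of $G$ lies in some bag $B_t$, and every edge $uv$ of $G$ has both endpoints in some bag $B_t$. The node $t$ lies in some $Y_j$, so $v\in Z_j$, respectively $u,v\in Z_j$. In a semi-tree decomposition, an edge may be covered in a slightly weaker way than in a tree decomposition. I will verify from the definition of Bodlaender and Fomin~\cite{BodlaenderFomin2002} that the weaker requirement is still witnessed inside a single bag $B_t$, or inside a single set of nodes of $T$, which then sits in one $Y_j$. This is the point where the semi-tree generality must be handled.

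\textbf{Interval property.} This is the main step. Fix a vertex $v$ and let $S_v\subseteq V(T)$ be the set of nodes whose bags contain $v$. By the decomposition axioms, $S_v$ induces a connected subgraph of $T$. The indices $j$ with $v\in Z_j$ are exactly those with $Y_j\cap S_v\neq\emptyset$, so it suffices to prove the following claim.

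\emph{Claim.} For any connected subgraph $S$ of $T$, the set $J_S=\{j : Y_j\cap V(S)\neq\emptyset\}$ is an interval.

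\emph{Proof of the claim.} Restricting the bags to $V(S)$ gives a path decomposition $Y_1\cap V(S),\dots,Y_m\cap V(S)$ of $S$. Suppose $a<b<c$ with $a,c\in J_S$ but $b\notin J_S$. Let $L$ be the set of vertices of $S$ that occur in some bag $Y_j$ with $j<b$, and let $R$ be the set of those that occur in some bag with $j>b$. By the interval property of each vertex of $S$, the sets $L$ and $R$ are disjoint, and both are nonempty. Their union is $V(S)$, since bag $b$ is empty on $S$.

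Any edge of $S$ must lie in a common bag. That bag cannot have index $b$, which is empty on $S$. If it has index $j<b$, both endpoints are in $L$; if $j>b$, both are in $R$. So no edge joins $L$ to $R$, which contradicts the connectivity of $S$.

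Applying the claim with $S=S_v$ gives the interval property for every vertex $v$, completing the proof.

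I expect the only delicate part to be confirming that the connectivity of $S_v$ and the edge-covering condition survive in the semi-tree setting. For ordinary tree decompositions, which is the case actually used later in this section, the argument above is complete.
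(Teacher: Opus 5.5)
The paper gives no proof of this lemma; it quotes it from Bodlaender and Fomin and only applies it to ordinary tree decompositions, in the proof of \Cref{lem:conversion}. Your construction $Z_j=\bigcup_{t\in Y_j}B_t$ is exactly what the bound $(k+1)(\pw(T)+1)-1$ encodes, and for tree decompositions your argument is complete and correct. The width count and coverage are immediate. Your interval claim is also sound: $L$ and $R$ are disjoint by the interval property in $T$, nonempty because of $Y_a$ and $Y_c$, and cover $V(S)$ because $Y_b\cap V(S)=\emptyset$. No edge of $S$ can cross between them, since its witnessing bag has index different from $b$.

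The one loose end is the semi-tree generality. You never commit to a definition, so as written you have proved the lemma only for ordinary tree decompositions, which happens to be the only case the paper needs. To close it, state the definition you use and check two facts against it.
\begin{itemize}
\item For every vertex $v$, the nodes whose bags contain $v$ must still induce a connected subgraph of $T$. Your claim then applies unchanged; note that it never uses that $T$ is acyclic.
\item Every edge of $G$ must have both endpoints in $\bigcup_{t\in U}B_t$ for some node set $U$ contained in a single bag of every path decomposition of $T$. For example, $U$ may be a single node or two adjacent nodes of $T$, since the edge axiom for $(Y_j)$ places adjacent nodes in a common $Y_j$.
\end{itemize}
Any relaxation of the edge axiom of this kind goes through verbatim. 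If instead the connectivity axiom were weakened, the lifted sets $\{j : v\in Z_j\}$ could fail to be intervals. So identifying which axiom is relaxed is the actual content of the semi-tree case, not a formality.
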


A \emph{complete binary tree of height $h$} is a rooted tree in which every
non-leaf vertex has two children and every leaf is at distance $h$ from the
root.

\begin{lemma}[Marshall and Wood~{\cite[Lemma~6]{MarshallWood2015}}]
\label{lem:binary-tree-minor}
Let $T$ be a forest and let $t\geq 1$.  If $\pw(T)\geq t$, then $T$ contains
a complete binary tree of height $t-1$ as a minor.  $[\dots]$
\end{lemma}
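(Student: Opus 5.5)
We prove the contrapositive, by induction on $t$: if $T$ contains no complete binary tree of height $t-1$ as a minor, then $\pw(T)\leq t-1$. It suffices to treat a tree $T$, since a forest has pathwidth equal to the maximum over its components (concatenate their decompositions). For $t=1$, a complete binary tree of height $0$ is a single vertex. So excluding it means $T$ is empty, and the bound holds vacuously. In practice the interesting base is $t=2$: excluding the height-$1$ binary tree, which is the path on three vertices, forces every component to be an edge or a vertex, so the pathwidth is at most $1$.

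For the induction step, assume $\pw(T)\geq t\geq 2$. We want to find a vertex $v$ of $T$ such that at least two components of $T-v$ have pathwidth at least $t-1$. By induction, each of these two components contains a complete binary tree of height $t-2$ as a minor. Root each of these minors at the branch set closest to $v$, then join the two roots to $v$ through the edges and paths of $T$. This gives a complete binary tree of height $t-1$ as a minor: contract each connecting path into $v$'s branch set, and delete the other branches.

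The key step is the classical lemma that such a vertex $v$ exists (the Ellis--Sudborough--Turner / Scheffler characterization). We argue by contradiction. Suppose that for every vertex $v$, at most one component of $T-v$ has pathwidth at least $t-1$. If no vertex has such a component, pick any vertex $v$: every component of $T-v$ has pathwidth at most $t-2$. Concatenate their decompositions and add $v$ to every bag; this gives $\pw(T)\leq t-1$, a contradiction. Otherwise, orient from each vertex $v$ toward its unique heavy component. Following these pointers, we obtain a path $Q=q_1\dots q_m$ in $T$ such that every component of $T-V(Q)$ has pathwidth at most $t-2$. We choose $Q$ maximal, using that heavy components are nested, so the pointers cannot cycle except along an edge. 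Then build a path decomposition by processing $q_1,\dots,q_m$ in order. For each $q_j$, concatenate the decompositions of the components of $T-V(Q)$ hanging off $q_j$, each with $q_j$ added to every bag. Insert the bag $\{q_j,q_{j+1}\}$ between consecutive stages. The width is at most $t-1$, again a contradiction.

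The main obstacle is making the existence of the spine $Q$ rigorous, that is, showing that the components of $T-V(Q)$ are all light. We handle this with an edge-orientation argument. Orient each edge $uv$ toward the side that contains a component of pathwidth at least $t-1$ with respect to the other endpoint. At each vertex at most one edge points out, and this forces the set of "heavy-pointing" edges to form a path. Once $Q$ is established, the width bookkeeping is routine.
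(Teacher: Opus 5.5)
\textbf{Gap: the joining step.} Your induction hypothesis only says that each heavy component $C_i$ of $T-v$ contains \emph{some} model of the complete binary tree of height $t-2$. It gives no control over where the root branch set of that model lies. The branch set nearest to $v$ can be a leaf branch set. A complete binary tree re-rooted at a leaf is not a complete binary tree. Moreover, the path from $v$ to the true root branch set runs through other branch sets, so it cannot be contracted into the branch set of $v$. For instance, take $t=3$: the model of the height-$1$ tree $P_3$ supplied in $C_i$ may be three consecutive vertices of a path starting at the neighbour of $v$, rooted at the middle vertex. Your construction then yields only a path on $7$ vertices, and a path has no complete binary tree of height $2$ as a minor. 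That $T$ does contain the desired minor does not help: the argument has to work for whatever models the induction supplies.

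\textbf{Missing idea: a rooted strengthening.} Prove instead that for every vertex $x$ of a tree $T$ with $\pw(T)\geq t$, there is a model of the complete binary tree of height $t-1$ whose root branch set contains $x$. With this, your joining works: prescribe the neighbour of $v$ in $C_i$ as the vertex in the root branch set. However, your lemma is too weak to prove the strengthened statement. When $x$ lies in one of the two heavy components of $T-v$, the root branch set must connect $x$ to $v$ while avoiding both child models, and your lemma gives nothing for that. There are two ways to fix this.
\begin{enumerate}
\item Use the actual Ellis--Sudborough--Turner/Scheffler characterization. For $t\geq 2$ it gives a vertex $v$ with \emph{three} components of $T-v$ of pathwidth at least $t-1$, and two of them avoid $x$.
\item Prove a rooted version of your lemma. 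Root $T$ at $x$. If no vertex has two children whose subtrees have pathwidth at least $t-1$, descend from $x$ repeatedly to the unique heavy child. This gives a path $Q$ such that every component of $T-V(Q)$ has pathwidth at most $t-2$, and your spine decomposition then gives $\pw(T)\leq t-1$. Otherwise, take such a vertex $v$ with heavy children $c_1,c_2$. Use the $x$--$v$ path as the root branch set, and apply the rooted hypothesis to the subtrees at $c_1$ and $c_2$ with prescribed vertices $c_1$ and $c_2$.
\end{enumerate}

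\textbf{Minor issues in your pointer argument.} The rooted lemma above avoids them.
\begin{itemize}
\item The case split must be on whether \emph{some} vertex lacks a heavy component, not whether no vertex has one, since your orientation step needs every vertex to have one.
\item Your final paragraph claims that out-degree at most one at every vertex forces the oriented edges to form a path. This is false: orient all edges towards a root. Your middle paragraph does give a valid spine, namely a pointer walk ending at a doubly oriented edge, but the last paragraph's justification is not correct.
\end{itemize}
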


\begin{corollary}\label{cor:pathwidth-bound-leafage}
    Any tree $T$ with $q$ leaves has pathwidth at most $\floor{\log_2 q} + 1$.
\end{corollary}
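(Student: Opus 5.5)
We argue by contraposition from \cref{lem:binary-tree-minor}. Let $T$ be a tree with $q$ leaves and set $t=\pw(T)$. If $t=0$ there is nothing to prove, since $\floor{\log_2 q}+1\geq 1>0$. Otherwise $t\geq 1$, and \cref{lem:binary-tree-minor}, applied to $T$ viewed as a forest, shows that $T$ contains a complete binary tree $T^*$ of height $t-1$ as a minor. The plan is to compare leaf counts: $T^*$ has $2^{t-1}$ leaves, so it suffices to show that a tree cannot have a minor with more leaves than itself.

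The key step is therefore the following claim: if a tree $T^*$ is a minor of a tree $T$, then the number of leaves of $T^*$ is at most the number of leaves of $T$. A minor of a connected graph is obtained by deletions and contractions, but connectivity may be lost along the way. To avoid tracking disconnected intermediate graphs, we use the model formulation. There are disjoint connected subtrees $S_u\subseteq T$, one for each $u\in V(T^*)$, and for every edge $uv$ of $T^*$ there is an edge of $T$ joining $S_u$ and $S_v$. Take the union $T'$ of all the $S_u$ together with one chosen connecting edge for each edge of $T^*$. This $T'$ is a connected subgraph of $T$, hence a subtree, and contracting each $S_u$ turns $T'$ into $T^*$.

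We then count leaves in two steps. First, a subtree $T'$ of a tree $T$ has at most as many leaves as $T$. To see this, root $T$ arbitrarily. For each leaf $\ell$ of $T'$, walk away from the rest of $T'$ inside $T$; since $T$ is a tree, these walks end in distinct leaves of $T$, because the components of $T-E(T')$ attached at different leaves of $T'$ are disjoint. Second, for each leaf $u$ of $T^*$, the subtree $S_u$ is incident in $T'$ to exactly one connecting edge. If $S_u$ is a single vertex, that vertex is a leaf of $T'$. Otherwise $S_u$ has at least two leaves of its own, at least one of which does not carry the connecting edge, and that vertex is a leaf of $T'$. These leaves of $T'$ are distinct for distinct $u$. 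Combining the two steps, the number of leaves of $T$ is at least the number of leaves of $T^*$, which proves the claim.

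Applying the claim gives $2^{t-1}\leq q$, hence $t-1\leq \log_2 q$. Since $t$ is an integer, $t\leq\floor{\log_2 q}+1$. The main obstacle is the leaf-monotonicity claim, specifically its first step, where a little care is needed to ensure that the leaves of the subtree $T'$ map injectively to leaves of $T$. The edge cases $q\leq 1$ also need a check. A tree with $q\leq 1$ leaves is a single vertex, which has pathwidth $0$, and this is within the bound under either convention for the leaf count of a single vertex.
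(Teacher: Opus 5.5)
Your proof is correct and follows the paper's argument: contraposition of \cref{lem:binary-tree-minor}, combined with the fact that a minor of a tree $T$ has at most as many leaves as $T$, so a complete binary tree of height $\pw(T)-1$ forces $2^{\pw(T)-1}\leq q$. The only difference is that you prove this leaf-monotonicity via a minor model, which the paper takes as evident, and your proof of it is sound; one small correction is that if a single vertex is counted as having $q=0$ leaves, the bound is undefined rather than satisfied.
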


\begin{proof}
    A tree $T$ with $q$ leaves cannot contain a binary tree of height $\floor{\log_2 q}+1$ as a minor. By contraposition of \Cref{lem:binary-tree-minor}, the pathwidth of $T$ is then at most $\floor{\log_2 q} + 1$.
\end{proof}

\begin{lemma}\label{lem:pathwidth-leafage-paths}
    Let $T$ be a subcubic tree and $T_0$ a subtree of $T$ with $q$ leaves such that $T-T_0$ is a forest of paths. Then, $\pw(T) \leq \floor{\log_2 q}+5$.
\end{lemma}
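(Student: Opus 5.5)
We will build a path decomposition of $T$ directly. First, $\pw(T_0)\leq\floor{\log_2 q}+1$ by \Cref{cor:pathwidth-bound-leafage}; fix an optimal path decomposition $D_0=(A_1,\dots,A_m)$ of $T_0$ with width at most $\floor{\log_2 q}+1$. Since $T$ is subcubic and $T-T_0$ is a forest of paths, every component $Q$ of $T-T_0$ is a path. Each such $Q$ is attached to $T_0$ by the edges between $V(Q)$ and $V(T_0)$.

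The key structural point is that, because $T$ is a tree, each component $Q$ has exactly one edge to $T_0$: two such edges would create a cycle together with the $T_0$-path between their endpoints. So $Q$ hangs from a single vertex $u_Q\in V(T_0)$ via an edge $u_Qv_Q$, where $v_Q\in V(Q)$ need not be an endpoint of $Q$. Splitting $Q$ at $v_Q$ gives two subpaths, each starting at a neighbor of $v_Q$ (or empty). Hence $Q$ has a path decomposition of width $2$ in which $v_Q$ lies in every bag: walk along one side, then the other, using bags $\{v_Q,x_j,x_{j+1}\}$. Adding $u_Q$ to every bag gives a path decomposition $D_Q$ of $Q+u_Qv_Q$ of width at most $3$ with $u_Q$ in every bag.

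Now we insert the $D_Q$ into $D_0$. For each $Q$, pick an index $i_Q$ with $u_Q\in A_{i_Q}$. Directly after $A_{i_Q}$, we insert the bags $A_{i_Q}\cup X$ for $X\in D_Q$. If several components share the same index, we insert their blocks consecutively, each block unioned with $A_{i_Q}$. This is valid for the following reasons. Vertices of $Q$ appear only in a contiguous block. A vertex of $T_0$ that appears in $A_{i_Q}$ also appears in all inserted bags, so contiguity is preserved (vertices of $T_0$ not in $A_{i_Q}$ never appear in them). Every edge of $Q$ and the edge $u_Qv_Q$ is covered by $D_Q$. Bags are of size at most $|A_{i_Q}|+4\leq \floor{\log_2 q}+2+4$, but $u_Q\in A_{i_Q}$ is counted twice, so the size is at most $\floor{\log_2 q}+5$. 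This gives width at most $\floor{\log_2 q}+4$, within the claimed $\floor{\log_2 q}+5$.

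The only delicate step is the single-attachment claim together with the case where $v_Q$ is interior to $Q$. We handle this by keeping $v_Q$ in all bags. It costs one extra unit of width, which the slack in the stated bound absorbs. Subcubicity is not even essential here; it only guarantees that $v_Q$ has degree at most $2$ inside $Q$, which is automatic because $Q$ is a path.
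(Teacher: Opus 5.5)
Your proof is correct, and it takes a simpler route than the paper's.

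\textbf{The paper's route.} It works in two stages. First it enlarges $T_0$ to a tree $T'$ by absorbing the paths hanging at leaves of $T_0$. Subcubicity bounds the number of leaves of $T'$ by $4q$, so \Cref{cor:pathwidth-bound-leafage} applied to $T'$ gives $\floor{\log_2 q}+3$. Only then does it use the bag-union insertion trick, with width-$1$ decompositions, for the paths hanging at internal vertices of $T_0$, which adds $2$.

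\textbf{Your route.} You apply the corollary to $T_0$ itself and use the same insertion trick for every hanging path, wherever it attaches. This removes the leaf/internal distinction and makes the subcubicity hypothesis unnecessary. Your single-attachment observation is exactly what makes this uniform treatment work. It gives width $\floor{\log_2 q}+4$. The paper's two-stage approach buys nothing extra for this statement.

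\textbf{Slack in your argument.} Since $u_Q\in A_{i_Q}$ lies in every inserted bag, the edge $u_Qv_Q$ is covered as soon as $v_Q$ appears in one inserted bag. So you need not keep $v_Q$ in all bags. The natural width-$1$ decomposition of $Q$ already yields $\pw(T)\leq\floor{\log_2 q}+3$.

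\textbf{One small point to make explicit.} For a vertex $y\in V(T_0)\setminus A_{i_Q}$, contiguity is preserved because the interval of $y$ in $D_0$ does not contain $i_Q$. It therefore lies entirely among $A_1,\dots,A_{i_Q-1}$ or entirely among $A_{i_Q+1},\dots,A_m$, and the block inserted between $A_{i_Q}$ and $A_{i_Q+1}$ cannot split it. Your parenthetical uses this implicitly.
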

\begin{proof}
Each path $P$ in $T-T_0$, is connected to $T_0$ by an edge, which can originate from an internal vertex of $P$ or from an endpoint of $P$. Since $T$ is subcubic, at most two such paths are connected to each leaf of $T_0$. Denote by $T'$ the subtree of $T$ obtained from $T_0$ by adding all paths of $T-T_0$ connected by an edge to a leaf of $T_0$. Notice that $T'$ has at most $4q$ leaves. By \Cref{cor:pathwidth-bound-leafage}, $\pw(T') \leq \floor{\log_2(4q)} + 1 = \floor{\log_2 q} + 3$.

Consider a path decomposition $(T^*,\mathcal B^*)$ of $T'$ of width at most $\floor{\log_2 q} + 3$. We show how to transform it into one for $T$ by increasing the bag size by at most~2. To do so, consider the paths of $T-T_0$ connected to internal vertices of $T_0/T'$. Since $T$ is subcubic, at most one path $P_v$ of $T-T_0$ is connected to each internal vertex $v$ of $T_0/T'$. 
Consider the natural path decomposition of width at most~1 of $P_v$. Consider any bag $B$ of $(T^*,\mathcal B^*)$ containing $v$. Make a copy of $B$ and insert, between the two copies of $B$, the sequence of bags of the path decomposition of $P_v$, where we add to each bag, the vertices of $B$. Repeating this for every internal vertex of $T_0$ that has an edge to some path in $T-T_0$, we obtain a path decomposition of $T$ of width at most $\floor{\log_2 q} + 5$, as claimed.
\end{proof}

\begin{lemma}\label{lem:conversion}
    Let $\tw(G)\leq w$ and $X$ be a non-empty vertex set of $G$ of size $r$ such that $\pw(G - X) \leq p$. Then, $\pw(G)\leq (w+p+2)(\floor{\log_2 r}+6)-1$.
\end{lemma}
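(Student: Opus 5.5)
We would follow the strategy announced before \Cref{lem:semi-tree}. We build a tree decomposition of $G$ of width $O(w+p)$ whose underlying tree $T$ is subcubic and contains a subtree $T_0$ with at most $r$ leaves such that $T-T_0$ is a forest of paths. Then \Cref{lem:pathwidth-leafage-paths} gives $\pw(T)\leq\floor{\log_2 r}+5$. Finally \Cref{lem:semi-tree} gives $\pw(G)\leq (\text{width}+1)(\floor{\log_2 r}+6)-1$, so we aim for width exactly $w+p+1$, i.e.\ bags of size at most $w+p+2$.

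\textbf{Construction.} Start from a tree decomposition $(T,\mathcal B)$ of $G$ of width at most $w$. We may assume $T$ is subcubic: this is the standard binarization, where each high-degree node is replaced by a path of copies of its bag. For each $x\in X$, pick one node $t_x$ whose bag contains $x$, and let $T_0$ be the minimal subtree of $T$ spanning $\{t_x : x\in X\}$. Its leaves are among the $t_x$, so it has at most $r$ leaves, and $r\geq1$ makes it nonempty. Each component $S$ of $T-T_0$ is a subtree attached by a single edge to $T_0$. We keep the bags of $T_0$ unchanged, which costs width $w$.

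\textbf{Replacing the hanging subtrees.} For a hanging subtree $S$ attached at node $u\in T_0$, let $U_S=\bigcup_{t\in S}B_t$, and set $Y_S=U_S\setminus B_u$. The subtree $S$ contains no chosen node $t_x$, but vertices of $X$ may still lie in bags of $S$. Any such $x$ also lies in $B_u$, because its subtree in $T$ meets both $S$ and $t_x\in T_0$, and the only route between them passes through $u$. Hence $Y_S\cap X=\emptyset$ and $Y_S\subseteq V(G-X)$. Moreover, every neighbour of $Y_S$ outside $Y_S$ lies in $B_u\cap U_S$; call this separator $Z_S$, with $|Z_S|\leq w+1$. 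Restricting a path decomposition of $G-X$ of width $p$ to $Y_S$ gives a path decomposition of $G[Y_S]$ of width at most $p$. Adding $Z_S$ to every bag gives bags of size at most $p+1+w+1=w+p+2$. We replace $S$ by this path, attached to $u$ by one edge.

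\textbf{Checking validity.} Edges inside $Y_S$ are covered by the path decomposition. Edges between $Y_S$ and $Z_S$ are covered because $Z_S$ is in every bag of the new path. Edges among other vertices are covered in $T_0$ or in other hanging parts. For connectivity, a vertex of $Z_S$ lies in $B_u$ and in every bag of the path, so its bags stay connected. Vertices of $Y_S$ appear only in that path, in a contiguous interval. Subcubicity is not broken: each hanging component is replaced by a single path attached at the same edge, so $u$ keeps its degree, and an endpoint of the path becomes adjacent to $u$. Thus $T-T_0$ is a forest of paths, and the total width is $\max(w,w+p+1)=w+p+1$.

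\textbf{Main obstacle.} The step that needs the most care is the bookkeeping in the last paragraph. We must confirm that a vertex of $X$ inside a hanging subtree really lies in the attachment bag, which forces the choice of $T_0$ as a spanning subtree of representative nodes. We must also confirm that the attachment keeps $T$ subcubic, which is why the binarization is done first.
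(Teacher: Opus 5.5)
Your proposal is correct and follows essentially the same route as the paper: binarize the tree decomposition, take the minimal subtree $T_0$ spanning one representative node per vertex of $X$, replace each hanging subtree by a path decomposition of its $X$-free part with the attachment bag added, then apply \Cref{lem:pathwidth-leafage-paths} and \Cref{lem:semi-tree} with width $w+p+1$. The differences are cosmetic. You use $Y_S=U_S\setminus B_u$ and add only $Z_S=B_u\cap U_S$, whereas the paper takes the vertices of the hanging bags minus $X$ and adds all of $B_\alpha$. You also make explicit the fact, left implicit in the paper, that every vertex of $X$ occurring in a hanging subtree lies in its attachment bag.
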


\begin{proof}
    Take a subcubic tree decomposition $\mathcal T = (T,\mathcal B)$ of $G$ (for instance a classic nice tree decomposition, see~\cite{bodlaender2013fine}) of width $w$, that is, where $T$ is subcubic. For each vertex $x\in X$, fix a bag $B_x$ of $\mathcal B$ containing~$x$.  Define $T_0$ as the inclusion-wise minimal subtree of $T$ containing $B_x$ for all $x\in X$. By minimality, $T_0$ has at most $r$ leaves. Notice that $T-T_0$ is a forest. We call a connected component of $T-T_0$ a \emph{hanging subtree}. $T$ can thus be decomposed into a set of hanging subtrees attached to $T_0$. Let $T'$ be one of those hanging subtrees, call $\alpha$ the node of $T_0$ connected to $T'$, and call $B_\alpha$ its corresponding bag. Denote by $V'$ the set of vertices of $G$ that are present in bags corresponding to nodes of $T'$ except vertices of $X$. Since $\pw(G-X) = p$, we have that $\pw(G[V']) \leq p$. By considering a path decomposition of $G[V']$ and adding vertices of $B_\alpha$ to all bags of this decomposition, we get a path decomposition of $G[V'\cup B_\alpha]$ of width at most $p+w+1$. Replace \(T'\) by the path underlying this path decomposition, and attach its first node to \(\alpha\). Since \(B_\alpha\) is contained in every bag of the path decomposition, tree decomposition properties are preserved across the attachment. Repeat this for each component of $T-T_0$, and call the resulting tree $\bar T$. This tree decomposition has now width at most $w+p+1$ with $\bar T$ being subcubic, and $\bar T - T_0$ is a forest of paths. Thus, we have that $\pw(\bar T) \leq \floor{\log_2 r} + 5$ by \Cref{lem:pathwidth-leafage-paths}. We finally use \Cref{lem:semi-tree} to get that $\pw(G) \leq (w +p+2)(\pw(\bar T) + 1) - 1$, which yields the claim.
\end{proof}

\begin{theorem}\label{thm:edge-upper}
If $E(G)$ is covered by $k\geq1$ shortest paths and $\tw(G)\leq w$, then
 $\pw(G)\leq(w+6k+2)(\floor{4\log_2 k}+16)-1$. 
Consequently, every graph edge-covered by $k$ shortest paths satisfies
$\pw(G)=O(k^2\log k)$. Moreover, for a fixed graph $H$, if $G$ is $H$-minor-free, then $\pw(G)=O_H(k\log k)$.
\end{theorem}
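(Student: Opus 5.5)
The plan is to apply \Cref{lem:conversion} with the modulator given by \Cref{thm:baste-pathwidth-components}. Without loss of generality $G$ has no isolated vertices, since these do not affect pathwidth. Then the paths $P_1,\ldots,P_k$ also cover $V(G)$.

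For each $i$ let $X_i$ be the set from \Cref{thm:baste-pathwidth-components}, and set $X=\bigcup_{i=1}^k X_i$. Then
\[
|X|\leq k(720k^3+4k)=720k^4+4k^2 .
\]
We claim that $\pw(G-X)\leq 6k$. Let $C$ be any component of $G-X$. It contains a vertex, which lies on some $P_i$, so $C$ meets $P_i$. Since $X_i\subseteq X$, the graph $C$ is contained in a single component $C'$ of $G-X_i$. This $C'$ also meets $P_i$, so $\pw(C')\leq 6k$, and by subgraph monotonicity $\pw(C)\leq 6k$. Pathwidth of a disjoint union is the maximum over its components (concatenate the decompositions), so $\pw(G-X)\leq 6k$.

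If $X$ is empty we are done directly. Otherwise \Cref{lem:conversion} with $r=|X|$ and $p=6k$ gives
\[
\pw(G)\leq (w+6k+2)(\floor{\log_2 r}+6)-1 .
\]
It remains to check that $\floor{\log_2(720k^4+4k^2)}+6\leq \floor{4\log_2 k}+16$. We have $720k^4+4k^2\leq 724k^4<2^{10}k^4$, so $\log_2 r< 10+4\log_2 k$. Hence $\floor{\log_2 r}\leq \floor{4\log_2 k}+10$: indeed $\floor{\log_2 r}$ is an integer strictly less than $10+4\log_2 k$, and so it is at most $\floor{4\log_2 k}+10$ after separating the integer part. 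This is the only slightly delicate arithmetic step and gives the stated constant.

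For the consequences, apply \Cref{cor:exact-treewidth}: because the edges, and hence the non-isolated vertices, are covered by $k$ shortest paths, $w=O(k^2)$. This yields $\pw(G)=O(k^2\log k)$ (for $k=1$ the bound is a constant). For $H$-minor-free $G$, \Cref{cor:minor-closed-treewidth} gives $w=O_H(k)$ and hence $\pw(G)=O_H(k\log k)$.
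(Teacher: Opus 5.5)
Your proposal is correct and follows essentially the same route as the paper: take $X=\bigcup_i X_i$ from \Cref{thm:baste-pathwidth-components}, show $\pw(G-X)\leq 6k$ because every component of $G-X$ meets some $P_i$ and lies inside a component of $G-X_i$, then apply \Cref{lem:conversion} with $|X|\leq 724k^4$ and plug in \Cref{cor:exact-treewidth} or \Cref{cor:minor-closed-treewidth}. Your separate treatment of the case $X=\emptyset$ (which \Cref{lem:conversion} excludes) and your careful floor arithmetic tidy up details the paper glosses over, but the argument is the same.
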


\begin{proof}
We may assume without loss of generality that $G$ is connected and has at least one edge. 
Let
$P_1,\ldots,P_k$ be the covering paths. By \cref{{thm:baste-pathwidth-components}}, for each $i$ with $1\leq i\leq k$, there is a
set $X_i$ of at most $720k^3+4k$ vertices such that every component of
$G-X_i$ meeting $P_i$ has pathwidth at most $6k$.

Let $X=\bigcup_iX_i$.  Then $|X|\leq720k^4+4k^2\leq 724k^4.$
Every component $K$ of $G-X$ contains a vertex of some $P_i$, because the
paths cover all edges and there are no isolated vertices.  Since
$X_i\subseteq X$, the component $K$ lies in a component of $G-X_i$ meeting
$P_i$. Hence, we have $\pw(G-X)\leq6k$.

We apply
\cref{lem:conversion}: the bound on $|X|$ gives $\floor{\log_2 |X|}\leq\floor{4\log_2 k} +10$ and proves the sought inequality. Note that if the paths cover the edges, they also cover the vertices (we disregard isolated vertices as $G$ was assumed to be connected with at least one edge).  We may therefore apply the
above upper bound with $w=\tw(G)$.  By \cref{cor:exact-treewidth},
$\tw(G)\leq441k^2+90k$, and we obtain $\pw(G)=O(k^2\log k)$. For $H$-minor-free graphs, we have $\tw(G)=O_H(k)$ by \cref{cor:minor-closed-treewidth} and thus, $\pw(G)=O_H(k\log k)$.
\end{proof}

\section{A superlinear lower bound for edge-covering by shortest paths}\label{sec:edge-lower}

We now show that existing constructions provide a new superlinear lower bound for the treewidth of graphs coverable with $k$ shortest paths, answering a question of Baste et al.~\cite{BasteDMGOP2026}. Our construction is inspired by the one in~\cite[Theorem~8]{CampbellDGGHHMW2023}.

\begin{theorem}\label{thm:edge-lower}
There is an absolute constant $c>0$ such that, for infinitely many $k$, there is a
cubic graph $G_k$ whose edges are covered by at most $k$ shortest paths and which satisfies $\pw(G_k) \geq \tw(G_k)\geq ck\log k.$
\end{theorem}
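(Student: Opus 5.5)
The plan is to take $G_k$ to be a cubic expander of large girth, for example a bipartite cubic Ramanujan graph or a random cubic graph conditioned on large girth. We need three properties for $n$ vertices: treewidth $\Omega(n)$, girth $g\geq c_1\log n$, and diameter $O(\log n)$. The treewidth bound is standard for expanders. Any balanced separator of a graph with edge (or vertex) expansion $h>0$ has size $\Omega(n)$, and a tree decomposition of width $t$ yields a balanced separator of size $t+1$. So $\tw(G_k)\geq c_2 n$. The inequality $\pw\geq\tw$ is immediate.

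The covering step uses large girth. In a graph of girth $g$, every path of length at most $\lfloor (g-1)/2\rfloor$ is a shortest path. Otherwise a strictly shorter path between its endpoints would close a cycle of length less than $g$, or at worst share a segment with it, which reduces to a shorter closed walk containing a cycle. So it suffices to decompose $E(G_k)$ into $O(n/\log n)$ paths, each of length $\Theta(\log n)$, at most $(g-1)/2$.

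To get such a decomposition, first decompose $E(G_k)$ into few long trails or paths. Since $G_k$ is cubic with $3n/2$ edges, one option is to use Petersen's theorem: take a perfect matching $M$, so that $G_k-M$ is a 2-factor, a disjoint union of cycles. Each cycle of the 2-factor has length at least $g$. Cut each cycle into consecutive paths of length between $\ell/2$ and $\ell$, where $\ell=\lfloor (g-1)/2\rfloor$. This gives $O(n/\log n)$ geodesic paths covering the 2-factor edges.

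The matching edges are the main obstacle, since covering each one separately would cost $n/2$ paths. The fix I would try is to absorb each matching edge $uv$ into an adjacent path. Choose the cutting points of the cycles so that the matching edges get attached at path ends. A cleaner alternative is to apply a known decomposition theorem: graphs with large girth and bounded degree decompose into paths of any prescribed length $\ell$ when $|E|$ is divisible by $\ell$ (results of Botler et al.\ and of Merker). Pad if needed by allowing paths of length in $[\ell,2\ell]$, with $2\ell<g/2$ ensured by scaling $\ell=\lfloor g/5\rfloor$. I would rely on such a theorem, or on the greedy extension argument, to get every path of length $\Theta(\log n)$ and below $g/2$. Hence all paths are shortest, and $k=O(n/\log n)$.

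It remains to relate the two bounds. From $k\leq c_3 n/\log n$ we get $n\geq c_4 k\log k$, since $n/\log n$ is increasing. Therefore $\tw(G_k)\geq c_2 n\geq c k\log k$. Taking the infinite family of such expanders (one exists for infinitely many $n$) gives infinitely many $k$. Bipartiteness, as advertised in the abstract, comes for free by using bipartite Ramanujan graphs (Marcus--Spielman--Srivastava, or LPS for girth) or the bipartite double cover, which preserves cubicity, expansion and the girth lower bound.
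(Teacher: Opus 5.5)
Your choice of expander, the girth argument that paths of length below $g/2$ are shortest paths, the treewidth bound, and the final conversion $n=\Omega(k\log k)$ all match the paper. The gap is in covering the perfect matching $M$, and it is structural, not a technicality. Your plan asks for an edge-disjoint \emph{decomposition} of $E(G_k)$ into $O(n/\log n)$ paths, and no cubic graph has one. In a decomposition into paths, the degree of a vertex has the same parity as the number of paths ending at it. So each of the $n$ vertices of degree $3$ is an endpoint of some path, and at least $n/2$ paths are needed.

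This rules out both of your fixes. Attaching matching edges at path ends absorbs at most two matching edges per path, which is only $O(n/\log n)$ of the $n/2$ edges of $M$. Likewise, no theorem on decompositions into paths of prescribed length $\ell\geq 4$ can hold for cubic graphs. The results you have in mind assume degree or edge-connectivity large in terms of $\ell$, and a cubic graph has neither. The same parity obstruction defeats any greedy extension argument.

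The missing idea is that a cover may reuse edges. The paper takes a 3-edge-colouring $M_1,M_2,M_3$, which exists by K\"onig's theorem since $G_k$ is bipartite. It then covers separately the two 2-factors $M_1\cup M_2$ and $M_1\cup M_3$, whose union is $E(G_k)$. Each of their cycles has length at least $g$ and is cut into $O(q/g)$ segments of length less than $g/2$, exactly as you do for $G_k-M$. This gives $O(n/\log n)$ shortest paths in total.

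In your own setup: since $G_k$ is bipartite, $G_k-M$ consists of even cycles and splits into two perfect matchings $M_2,M_3$. Then $M\cup M_2$ is a second 2-factor containing $M$, and covering it as well only doubles the number of paths. This is where bipartiteness (or at least 3-edge-colourability) is actually needed in the argument; it is not merely a cosmetic extra.
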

\begin{proof}
There exist arbitrarily large cubic bipartite Ramanujan graphs (a Ramanujan graph is a regular graph with specific spectral properties) with
girth $g=\Theta(\log n)$~\cite[Theorem~5.4]{Chiu1992}. Ramanujan graphs are expanders and thus have linear treewidth (because expanders do not have sublinear separators~\cite{HooryLinialWigderson2006}).

Every bipartite cubic graph can be
3-edge-colored (König's Line Coloring Theorem).
Hence, the edges can be partitioned into three
matchings $M_1,M_2,M_3$.  Together, two of them form a collection of disjoint cycles
(because each vertex has two incident edges from them) with $n$ edges in total.
Consider one of these cycles $C$ in $M_1\cup M_2$ or $M_1\cup M_3$ of length $q \geq g=\Omega(\log n)$. $E(C)$ can be covered by
$O(q/g)=O(q/\log n)$ edge-disjoint
paths, each of length less than $g/2$. These paths are automatically shortest paths, because a shortcut would imply the existence of a cycle of length less than the girth, which is impossible. Overall, we have $\Theta(n/\log n)$ shortest paths covering the edges of $M_1\cup M_2$, and the same for $M_1\cup M_3$.

Hence, taking the union of all these shortest paths covering $M_1\cup M_2$ and $M_1\cup M_3$, we have $k=\Theta(n/\log n)$ shortest paths
covering all edges of the graph. Hence, $k\log k=\Theta(n)$, and the treewidth of $G$ is $\Omega(n)$, which is thus $\Omega(k\log k)$.
\end{proof}

\section{Small quasi-shortest path covers can have unbounded pathwidth}\label{sec:quasi-shortest-unbounded}

We now show that when $\alpha>1$, pathwidth cannot be bounded in terms of $k$ when the graph is covered by $(\alpha,\beta)$-quasi-shortest paths, even for $k=2$ and $\beta=0$.

\begin{theorem}\label{thm:twopaths}
For every $\alpha>1$ and every integer $w\geq0$, there is a graph $G$
of order $2^{O_\alpha(w)}$ whose edges are covered by two $(\alpha,0)$-quasi-shortest paths and
whose pathwidth is at least $w$.
\end{theorem}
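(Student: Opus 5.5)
We will build a tree of large pathwidth whose edges are covered by two $(\alpha,0)$-quasi-shortest paths, or more robustly a graph containing such a tree as a subgraph or minor. Pathwidth is minor-monotone, and by \cref{lem:binary-tree-minor} (in its converse direction, the classical fact that a complete binary tree of height $2w$ has pathwidth at least $w$), it suffices to embed a complete binary tree $T_h$ of height $h=\Theta(w)$ as a subgraph. The order bound $2^{O_\alpha(w)}$ then comes from subdividing each edge of $T_h$ by a factor depending only on $\alpha$.

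\textbf{Construction.} Take a long ``spine'' path $S$, for instance the first covering path, traversing the leaves of a subdivided binary tree in DFS order. More concretely, subdivide the edges at depth $j$ of $T_h$ into paths of length $L\cdot\gamma^{h-j}$ for a ratio $\gamma=\gamma(\alpha)>1$, so that lengths shrink geometrically toward the leaves. The two covering walks are the two halves of an Euler tour. Each edge of a tree is traversed twice by an Euler tour, so I will instead route path $P_1$ along the ``left boundary'' and $P_2$ along the ``right boundary'' of each subtree recursively. Since simple paths in a tree cannot cover all edges with two paths once there are more than four leaves, the tree must be augmented: add, for each internal node, a connecting path between the rightmost leaf of its left subtree and the leftmost leaf of its right subtree, of length comparable to the current scale. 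This produces an outerplanar-like graph $G$. The resulting $G$ has pathwidth at least $\pw(T_h)\ge \Omega(h)$, since $T_h$ is a minor (indeed a subdivided subgraph). $P_1$ is then the DFS boundary path visiting leaves in order along the connectors and descending and ascending tree branches alternately, and $P_2$ covers the remaining tree edges.

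\textbf{Quasi-shortness.} The key step is verifying that every segment $P[a,b]$ has length at most $\alpha\cdot\dist_G(p_a,p_b)$. With geometric scaling, a segment that goes down a branch of scale $s$ and back up through a connector has length about $(1+2/(\gamma-1))\cdot s$, while the endpoint distance is at least about $s$, because the connectors are chosen long enough that no shortcut through them helps. Choosing $\gamma$ large in terms of $1/(\alpha-1)$ makes the ratio at most $\alpha$. Because $\beta=0$, short segments must also be checked: locally, the path must itself be geodesic at small scales, which holds if each small piece is a subdivided edge and connectors meet at degree-3 vertices without triangles or short cycles. The depth $h$ can then be $\Theta(w)$, and the total size is $\sum_j 2^j\gamma^{h-j}L=2^{O_\alpha(w)}$, as required.

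\textbf{Main obstacle.} The delicate part is the global distance estimate $\dist_G(p_a,p_b)$: the added connectors create shortcuts between distant parts of the tree, which could make some long segment of $P_1$ or $P_2$ far from geodesic. I would first try to prove a lower bound on distances by projecting $G$ onto the tree hierarchy. Between two vertices whose lowest common ancestor is at scale $s$, every path must cross a level-$s$ branch or connector, which gives distance at least $c\cdot s$ with $c$ close to $1$ once $\gamma$ is large. This must be matched against the segment-length upper bound via the geometric series. If this fails, the fallback is a recursive ``zigzag'' construction in which $P_1$ and $P_2$ are both monotone along a common long base geodesic, with binary-branching detours of geometrically decreasing size attached.
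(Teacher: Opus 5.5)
There is a genuine gap: you never specify the two covering paths, and the leaf-to-leaf connector design cannot be completed.

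First, every branch vertex of your graph has degree $3$. This covers each non-root internal node, and each interior leaf, which carries a tree edge and two connectors. A simple path uses at most two edges at a vertex, so \emph{both} $P_1$ and $P_2$ must pass through every branch vertex. Hence ``$P_2$ covers the remaining tree edges'' does not describe a path.

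More seriously, in the regime you propose, quasi-shortness fails for $\alpha$ close to $1$. This is the only case that matters, since the property is monotone in $\alpha$. Let $e$ be the length of the two edges below a node $w$, and let $c_w$ be its connector, of length comparable to $e$. The spines below the children $q_1,q_2$ of $w$ have length less than $e/(\gamma-1)$. Leaving a child subtree and re-entering it costs at least about $2\min(e,|c_w|)$, while its attachment vertices are within $2e/(\gamma-1)$ of each other. So for $\gamma$ large, each path meets each such subtree in one contiguous segment.

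Now take $w$ of height at least $2$, whose subtree avoids the four path endpoints and the outer spines. Let $c_w$ join the rightmost leaf $R_1$ below $q_1$ to the leftmost leaf $L_2$ below $q_2$.
\begin{itemize}
\item The path using $c_w$ crosses between the two child subtrees only there, but must still visit $w$. So it contains $w\to q_1\to\cdots\to R_1\to L_2$ (or the mirror image). This subpath has length at least $e+|c_w|$ between vertices at distance at most $e\gamma/(\gamma-1)$, which forces $|c_w|\le(\alpha\gamma/(\gamma-1)-1)e$.
\item If the other path crosses over $w$, it contains a subpath from $R_1$ through $q_1wq_2$ to $L_2$. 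This has length at least $2e$ between vertices joined by $c_w$, which forces $|c_w|\ge 2e/\alpha$.
\item If instead the other path also uses $c_w$, then one of the paths enters the $q_1$-subtree at $R_1$. It must leave at that subtree's leftmost leaf and visit $q_1$ in between, so it crosses over $q_1$, and the same two inequalities arise at $q_1$.
\end{itemize}
Since $2/\alpha>\alpha\gamma/(\gamma-1)-1$ for $\alpha$ near $1$ and $\gamma$ large, the construction is infeasible. Your heuristic ratio $1+2/(\gamma-1)$ misses exactly these segments.

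The paper's construction avoids this problem. Its connectors go from the \emph{root} of one copy to an \emph{input leaf} of the other, recursively, and both paths end at the root of every copy: $P_h=P_A\xrightarrow{l}P_B\xrightarrow{m}z$ and $Q_h=Q_B\xrightarrow{l}Q_A\xrightarrow{m}z$. Each path finishes a copy completely and leaves it through its root, so no path ever descends a tree edge and then takes a shortcut.

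For two vertices common to both paths, both path-distances lie in $[l,l+2d]$ when the vertices are in different copies, and in $[m,m+l+2d]$ when one of them is $z$. So choosing $l+2d\le\alpha l$ and $m+l+2d\le\alpha m$ makes the two paths $\alpha$-comparable on common vertices. Every excursion of a shortest path onto the other covering path can then be replaced at cost factor $\alpha$.

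Deleting all $l$-connectors leaves a subdivided complete binary tree of height $h$, which has pathwidth $\lceil h/2\rceil$. Finally, $l,m=O_\alpha(d)$ gives order $2^{O_\alpha(w)}$.
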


\begin{proof}
Fix $\alpha>1$. For each $h\geq 0$, we construct a connected graph $G_h$ edge-covered by
two $(\alpha,0)$-quasi-shortest paths $P_h,Q_h$ of equal
length with a common final endpoint.

For $h=0$, take a single vertex and paths that consist of that vertex.  Suppose
$G_{h-1}$ has been constructed, and its two paths have length $d$.
Take two disjoint copies $A,B$ of $G_{h-1}$ and a new vertex $z$.
Choose $l\geq2$
as the smallest integer satisfying the first inequality below, and then
choose $m\geq2$ as the smallest integer satisfying the second:
\begin{equation}\label{equ:lm}
 l+2d\leq\alpha l, \qquad
 m+l+2d\leq\alpha m.
\end{equation}
\begin{figure}[t]
\centering
\includegraphics{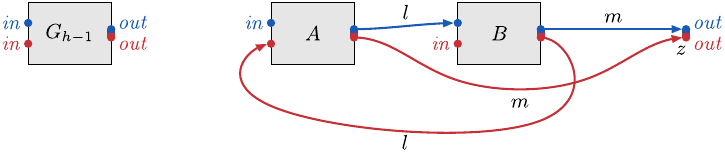}
\caption{Construction of $G_h$ from two copies of $G_{h-1}$.
The blue path is $P_h$ and the red one~$Q_h$.
Labels indicate the lengths of new connecting paths, whose internal
vertices are pairwise distinct.
Blue and red terminals are the endpoints of the
respective covering paths, with inputs on the left and outputs on the
right of each box.  The outputs of the two paths are always
identical.}
\label{fig:twopaths}
\end{figure}
Join the corresponding path endpoints according to Figure~\ref{fig:twopaths}
by new paths of length $l$ or $m$ to obtain
\[
 P_h=P_A\xrightarrow{\,l\,}P_B\xrightarrow{\,m\,}z,\text{ and }
 Q_h=Q_B\xrightarrow{\,l\,}Q_A\xrightarrow{\,m\,}z.
\]
Let $G_h=P_h\cup Q_h$.  Both $P_h$ and $Q_h$ have length $2d+l+m$
and final endpoint~$z$.

We first prove quasi-shortness of $P_h$ and $Q_h$.  For any two distinct vertices
$x,y$ lying on both $P_h$ and $Q_h$, we claim that their distance along these paths
differs by a factor of at most $\alpha$.  If $x,y$ belong to the same
copy $A$ or~$B$, this follows from induction.  If they belong to
$A$ and $B$, both distances lie in $[l,l+2d]$.  If one is $z$
and the other belongs to $A$ or $B$, both distances lie in $[m,m+l+2d]$.
(\ref{equ:lm})~proves the quasi-shortness condition of the subpaths of $P_h$ or $Q_h$ with
endpoints $x$ and~$y$.

Now let $x,y\in V(P_h)$, and take a shortest $x$--$y$ path $R$ in $G_h$.
Every excursion of $R$ outside $P_h$ is a subpath of $Q_h$ with endpoints
common to both covering paths.  Replace each excursion by the
corresponding subpath of $P_h$ that is at most by a factor $\alpha$ longer
because of the preceding paragraph.
The resulting walk is contained in $P_h$ and has length
at most $\alpha|R|$ and therefore $\dist_{P_h}(x,y)\leq\alpha\cdot\dist_{G_h}(x,y).$
Thus $P_h$ is $(\alpha,0)$-quasi-shortest
and the same argument applies to $Q_h$.

Finally, deleting the two length-$l$ connecting paths,
but retaining their endpoints, at every recursive level,
leaves a subdivision of the complete binary tree of height $h$, rooted at
the common final endpoint.  The case $h=0$ is a single vertex,
and at each step the two length-$m$ connecting paths join the new root $z$
to the roots of the subdivisions in $A$ and~$B$.
According to \cite[Theorem~67]{Bodlaender1998Arboretum}, the complete binary tree of
height $h$ has pathwidth~$\lceil h/2\rceil$. Thus, $G_h$ contains the complete binary tree of height~$h$ as a minor, and since pathwidth is
minor-monotone, taking $h=2w$ gives us the required graph~$G_h$ with
pathwidth at least~$w$. By our choices of $l$ and $m$, both are in $O_\alpha(d)$, and therefore $d$ grows geometrically with every recursive construction, implying that $G_h$ has order $2^{O_\alpha(w)}$.
\end{proof}

Exponential size is optimal up to the constant in the exponent:
by \cref{thm:treewidth},
every graph covered by two quasi-shortest paths has constant treewidth.
Since every $n$-vertex graph of treewidth $t>0$ has pathwidth
$O(t\log n)$~\cite{Bodlaender1998Arboretum}, pathwidth~$w$ implies $n=2^{\Omega_\alpha(w)}$.

\section{XP recognition for quasi-shortest path/walk covers}\label{sec:blocked}

\emph{Counting Monadic Second-Order} (CMSO) logic refers to \emph{Monadic Second-Order} (MSO) logic augmented with fixed modular-cardinality predicates (that is, checking whether a given set has size $a$ modulo $b$, for fixed values $a,b$). When considering graphs as inputs,
$\mathrm{CMSO}_1$ quantifies over vertex sets, while $\mathrm{CMSO}_2$ also
quantifies over edge sets.  It can express paths, incidence, and connectivity, but
cannot compare the cardinalities of two quantified sets.  We recall the precise $\mathrm{CMSO}_2$ fragment of Dreier, Ganian, and Hamm~\cite[Section~3.3 and
Definition~2]{DreierGH2025} that we will use. To do so, we fix nonnegative integer vertex and edge weight
symbols $w_i$ (given as inputs along with the graph).

A \emph{linear weight term} has the form $a+\sum_{i,j}a_{ij}w_i(X_j)$, where
 $w_i(X)=\sum_{x\in X}w_i(x)$, and the terms $a$ and $a_{ij}$ are constants called \emph{coefficients}. A coefficient is
\emph{nonnegative} if $a$ and all $a_{ij}$ are nonnegative real numbers, and
\emph{natural} if they are all natural numbers, including zero.  The logic
$\mathrm{CMSO}[\lessgtr]$
extends classical CMSO~\cite{Courcelle1990} by allowing atomic comparisons between
weight terms.

A \emph{block formula} has the form
$\forall\bar Y\,\phi(\bar X,\bar Y)$, where $\phi$ is a
$\mathrm{CMSO}[\lessgtr]$ formula containing only nonnegative weight terms.
All but at most one exceptional weight term depend only on the outer variables $\bar X$.
This exceptional term, if present, may also depend on $\bar Y$, but it occurs
only on one side of a single comparison.  \emph{Blocked
CMSO}, denoted $\forall\mathrm{CMSO}$, is the closure of block formulas under
conjunction, disjunction, and existential first- and second-order
quantification.  
For the sake of readability, we will slightly abuse notation when quantifiers and weight functions are applied to single variables. In particular, we will write $\forall x$ instead of $\forall x'\in \{x\}$, and $w(x)$ instead of $w(\{x\})$.

A $\forall\mathrm{CMSO}$ \emph{query} pairs a constraint using natural weight
comparisons with an integral linear optimization target. An \emph{assignment} for a query is a set of values for the free variables involved in it, and it is \emph{satisfying} if the query constraints hold when using the assigned values. Every query used
here is a feasibility query with the constant-zero target; it returns either a
satisfying assignment or a report that none exists. 

\begin{lemma}\label{lem:blocked-path}
For every fixed rationals $\alpha\geq1$ and $\beta\geq0$, there is a $\forall\mathrm{CMSO}_2$ formula
$\varphi_{\alpha,\beta}(P)$ with one free edge-set variable $P$ such
that $G\models\varphi_{\alpha,\beta}(P)$ if and only if $P$ is the
edge set of a positive-length $(\alpha,\beta)$-quasi-shortest path.
\end{lemma}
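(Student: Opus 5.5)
We build $\varphi_{\alpha,\beta}(P)$ as the conjunction of a plain CMSO$_2$ formula stating that $P$ is the edge set of a path, and one block formula stating quasi-shortness.

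\textbf{Path part.} Stating that $P$ is the edge set of a positive-length path is standard CMSO$_2$. We require that $P$ is nonempty, that every vertex is incident to at most two edges of $P$, that exactly two vertices are incident to exactly one edge of $P$, and that the subgraph $(V(P),P)$ is connected. Connectivity is expressed by quantifying over a vertex bipartition of the incident vertices that no $P$-edge crosses. This has no weight terms, so it is trivially a block formula.

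\textbf{Quasi-shortness.} Write $\alpha=p/q$ and $\beta=s/q$ with natural numbers $p,q,s$, and use the edge weight $w(e)=1$ for every edge $e$. The condition to encode is: for all vertices $x,y$ on the path, and for every $x$--$y$ path $R$ in $G$, the subpath $P[x,y]$ satisfies
\[ q\,w(P[x,y])\leq p\,w(R)+s. \]
This is equivalent to the definition, since taking $R$ to be a shortest $x$--$y$ path gives exactly $b-a\leq\alpha\dist_G(x,y)+\beta$. To fit the block shape, we must ensure that only one weight term involves the universally quantified variables, and only on one side of the comparison.

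We handle this by existentially quantifying the outer pieces. We cannot existentially guess a single subpath $S$ for all pairs at once, so the formula is built as follows. Universally quantify $x,y$, a set $S\subseteq P$, and a set $R$. Inside, if $S$ is the edge set of the $x$--$y$ subpath of $P$ (CMSO-expressible: $S\subseteq P$ forms a path with endpoints $x,y$) and $R$ is the edge set of an $x$--$y$ path, then require $p\,w(R)+s\geq q\,w(S)$.

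\textbf{Main obstacle.} This comparison has universally quantified variables on both sides, which the block definition forbids. The fix is to use the fact that $S$ is determined by $x,y$ and $P$. Since $w(S)\leq |P|$, we can instead rewrite the comparison as
\[ p\,w(R)+q\,w(P\setminus S)+s\geq q\,w(P). \]
Now the right side depends only on the outer variable $P$. The left side is a single nonnegative term depending on the inner variables, with natural coefficients (the constant $s$ is natural). This is exactly the allowed exceptional term. Complement sets like $P\setminus S$ can be handled by universally quantifying $S'$ with $S'=P\setminus S$, and using the term $q\,w(S')$.

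It remains to check that the formula is a single block formula $\forall\bar Y\phi$ with $\phi$ containing one comparison. The CMSO constraints on $x,y,S,S',R$ appear as the premise of an implication, which is allowed within $\phi$. Conjoining this with the path part gives $\varphi_{\alpha,\beta}(P)$, which lies in $\forall$CMSO$_2$ because the class is closed under conjunction.
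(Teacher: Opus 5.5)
Your proof is correct. It rests on the same idea as the paper's. The comparison $q\,|S|\le p\,|R|+s$ has universally quantified sets on both sides, so one of them is replaced by its complement relative to a set that is fixed outside the universal block.

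The difference is which set you complement.
\begin{itemize}
\item The paper complements the competitor path $R$ relative to $F=E(G)$. It therefore has to existentially quantify $F$ together with an $\operatorname{AllEdges}(F)$ predicate before the block, and it arrives at $sq|Q|+sp|S|\le sp|F|+rq$.
\item You complement the subpath $S$ relative to $P$ itself, which is already an outer (free) variable. You arrive at $p\,w(R)+q\,w(S')+s\ge q\,w(P)$.
\end{itemize}

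Your version is slightly more economical: there is no auxiliary variable $F$, and your $\varphi_{\alpha,\beta}(P)$ is just $\operatorname{Path}(P)$ conjoined with a single block whose only outer variable is $P$. Both formulas meet the block restriction equally well. As recalled in the paper, the definition only asks that the single exceptional (block-dependent) term appear on one side of one comparison. The direction of the inequality does not matter, nor whether the nonnegative constant sits on the inner or the outer side.

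One small wording point: the identity you actually use is $w(P\setminus S)=w(P)-w(S)$. It follows from $S\subseteq P$, not from $w(S)\le |P|$.
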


\begin{proof}
Let $\alpha=\frac{p}{q}$ and $\beta=\frac{r}{s}$ with nonnegative integers $p,q,r,s$. Let
$\operatorname{Path}(P)$ be the standard $\mathrm{MSO}_2$
predicate that $P$
is the edge set of a nonempty positive-length simple path. It asks that $P$ forms a connected subgraph, no vertex of $G$ is incident to three edges of $P$, and $P$ is acyclic (every subset of $P$ has a leaf), see \cite[Chapter 7]{CyganFKLMPPS2015}. Moreover, let
$\operatorname{RootedPath}(R,x,y)$ say that $R$ is the edge set of an $x$--$y$
path (same as $\operatorname{Path}(R)$ but $x,y$ are distinct and have degree~1 when considering only edges of $R$), $\operatorname{AllEdges}(F)$ say that $F=E(G)$, and
$\operatorname{Complement}(S,F,R)$ say that $S=F\setminus R$.  These are all
ordinary $\mathrm{MSO}_2$ predicates.

Let $\operatorname{SubPath}(Q,P,x,y)$ be the predicate saying that $Q\subseteq P$ is a subpath of $P$ with endpoints $x,y$. It is expressed as $\operatorname{RootedPath}(Q,x,y)$ together with the fact that $Q\subseteq P$. Hence, $\operatorname{SubPath}(Q,P,x,y)$ is $\mathrm{MSO}_2$.

For a subpath $Q$ with endpoints $x$ and $y$, quasi-shortness requires
$|Q|\leq \alpha|R|+\beta=\frac{p}{q}|R|+\frac{r}{s}$ for any shortest $x$--$y$ path $R$ (and thus, for any $x$--$y$ path $R$), that is, $sq|Q|\leq sp|R|+rq$. Blocked CMSO does not allow
this comparison because both sides depend on universally quantified sets $Q$ and~$R$. 
However, by using $S=E(G)\setminus R=F\setminus R$, we get $|R|=|F|-|S|$, and so the above condition can be reformulated as
\begin{equation}\label{eq:blocked-comparison}
sq|Q|+sp|S|\leq sp|F|+rq,
\end{equation}
which is allowed in Blocked CMSO because only $Q$ and $S$ are universally quantified (since $F=E(G)$ is fixed), and these universally quantified terms appear only on one side of the comparison. This is a crucial point.
Formally, let $\varphi_{\alpha,\beta}(P)$ be the following query:
\[
\begin{split}
\operatorname{Path}(P) \land \exists F\bigl(\operatorname{AllEdges}(F)\land
\forall Q\,\forall R\,\forall S\,\forall x\,\forall y\,[
 &\neg\operatorname{SubPath}(Q,P,x,y)\\
 {}\lor{}&\neg\operatorname{RootedPath}(R,x,y)\\
 {}\lor{}&\neg\operatorname{Complement}(S,F,R)\\
 {}\lor{}&sq|Q|+sp|S|\leq sp|F|+rq]\bigr).
\end{split}
\]

Requiring \eqref{eq:blocked-comparison}
for every $x$--$y$ path $R$ and subpath $Q$ is equivalent to
$|Q|\leq \alpha\cdot\dist_G(x,y)+\beta$. Universally quantifying $Q,x,y$
therefore gives exactly \cref{def:quasi-shortest}.

In $\varphi_{\alpha, \beta}(P)$, $P$ and $F$ are fixed outside the universal
block.  Hence $sp|F|+rq$ is an outer term, while $sq|Q|+sp|S|$ is the unique term
depending on block variables and occurs on only one side of one comparison.
Both are nonnegative natural weight terms, and all predicates are pure MSO$_2$.
Thus the formula belongs to $\forall\mathrm{CMSO}_2$.
\end{proof}

\begin{lemma}\label{lem:blocked-cover}
For fixed rationals $\alpha\geq 1$ and $\beta\geq 0$, and every $k\geq1$, there is a
$\forall\mathrm{CMSO}_2$ formula $\widehat\psi_{k,\alpha,\beta} (X_1,\ldots,X_k,P_1,\ldots,P_k)$ 
of size $O(k)$ whose satisfying assignments encode precisely the vertex
covers by at most $k$ $(\alpha,\beta)$-quasi-shortest paths.
\end{lemma}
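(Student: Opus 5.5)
The plan is to build $\widehat\psi_{k,\alpha,\beta}$ as a conjunction of $k$ copies of the formula $\varphi_{\alpha,\beta}$ from \cref{lem:blocked-path}, one per route, together with a pure MSO$_2$ covering condition. We use $P_i$ as the edge set of the $i$-th path and $X_i$ as its vertex set. The covering paths may have length zero, which $\varphi_{\alpha,\beta}$ excludes, so for each $i$ we allow two alternatives. Either $P_i=\emptyset$ and $|X_i|\leq 1$; a single vertex is trivially $(\alpha,\beta)$-quasi-shortest, and $X_i=\emptyset$ encodes an unused route, which realizes ``at most $k$''. Or $\varphi_{\alpha,\beta}(P_i)$ holds and $X_i$ is exactly the set of vertices incident to an edge of $P_i$; the latter is expressible in MSO$_2$ by $\forall v\,(v\in X_i\leftrightarrow \exists e\in P_i\,\operatorname{inc}(v,e))$. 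Finally, we add the covering constraint $\forall v\,\bigvee_{i=1}^k v\in X_i$. The formula is
\[
\widehat\psi_{k,\alpha,\beta}=\bigwedge_{i=1}^k\Bigl(\operatorname{Single}(X_i,P_i)\lor\bigl(\varphi_{\alpha,\beta}(P_i)\land\operatorname{Span}(X_i,P_i)\bigr)\Bigr)\land\operatorname{Cover}(X_1,\ldots,X_k).
\]

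For correctness, we argue both directions. Suppose an assignment satisfies $\widehat\psi_{k,\alpha,\beta}$. Each nonempty $X_i$ is then the vertex set of a single vertex or of a quasi-shortest path, by \cref{lem:blocked-path}, and these sets cover $V(G)$. Conversely, take any cover by at most $k$ quasi-shortest paths. Pad it with empty routes up to $k$ and assign each $P_i$ its edge set and each $X_i$ its vertex set; this gives a satisfying assignment. Hence the satisfying assignments are exactly the encodings of such covers.

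The main point to check is that the result really lies in $\forall\mathrm{CMSO}_2$, since that class is only closed under conjunction, disjunction and existential quantification. Each $\varphi_{\alpha,\beta}(P_i)$ is itself built from a block formula under $\exists F$ and conjunction, so it is fine. The auxiliary predicates $\operatorname{Single}$, $\operatorname{Span}$ and $\operatorname{Cover}$ contain universal quantifiers, but they have no weight terms at all. Each of them is therefore a block formula of the form $\forall\bar Y\phi$ with $\phi$ in CMSO, so it is allowed. Combining these pieces with $\land$ and $\lor$ stays inside the class. Distinct copies of $\varphi_{\alpha,\beta}$ need not share the existential variable $F$; we can rename it to $F_i$, or hoist a single $F$ outward. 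Either way, each universal block still contains at most one exceptional weight term on one side of a single comparison.

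For the size bound, $\varphi_{\alpha,\beta}$ has constant size for fixed $\alpha,\beta$, since its coefficients $p,q,r,s$ are fixed. The formula therefore consists of $k$ constant-size conjuncts plus the cover disjunction of length $k$, giving size $O(k)$.
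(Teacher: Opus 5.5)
Your proposal is correct and follows the paper's proof essentially verbatim. It uses the same per-label case split (empty $P_i$ with $|X_i|\leq 1$, versus $\varphi_{\alpha,\beta}(P_i)$ with $X_i$ the set of incident vertices), the same covering conjunct, and the same $O(k)$ size count, and your explicit remark that the weight-free MSO$_2$ predicates are themselves block formulas is a detail the paper leaves implicit.
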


\begin{proof}
For every label $i$, the vertex set $X_i$ and edge set $P_i$ encode one
path.  An ordinary MSO$_2$ predicate distinguishes two cases.  If $P_i$ is
empty, then $X_i$ is empty or a singleton.  If $P_i$ is nonempty, then $X_i$
is exactly the set of vertices incident with $P_i$, and
$\varphi_{\alpha,\beta}(P_i)$ holds.  Finally, require
$X_1\cup\cdots\cup X_k=V(G)$.  Empty labels allow fewer than $k$ paths.
The obtained formula
is closed under the required conjunctions, disjunctions, and
existential quantifiers, and each of the $k$ copies has constant size for
fixed $\alpha,\beta$.
\end{proof}

We use the following XP algorithmic meta-theorem to get the main result of this section.

\begin{theorem}[{Dreier, Ganian, Hamm~\cite[Theorem~5]{DreierGH2025}}]\label{thm:metatheorem}
    There exists a computable function $h$ such that for any $\forall\mathrm{CMSO}_2$ query $\psi$ with maximum coefficient or input weight $M$, and any graph $G$ of treewidth $w$ and order $n$, one can check in time $(M+n)^{h(|\psi|,w)}$ whether $\psi$ is satisfied by $G$. If yes, one can also obtain the corresponding values of the free variables in $\psi$.
\end{theorem}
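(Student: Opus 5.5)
The statement is the meta-theorem of Dreier, Ganian, and Hamm, which the paper uses as a black box. If I had to reprove it, I would use a Courcelle-style dynamic program over a tree decomposition. The twist is that the states also record the values of all weight terms, and for each universal block they record an extremal value of its exceptional term. The polynomial range of these numbers is what makes the running time XP rather than FPT.

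\textbf{Preprocessing.} First compute a nice tree decomposition of width $O(w)$ in time $f(w)n^{O(1)}$, for instance with Bodlaender's algorithm or a constant-factor approximation. Then normalise $\psi$ into the form $\exists\bar X\,\bigl(\chi(\bar X)\wedge\bigwedge_j\forall\bar Y_j\,\phi_j(\bar X,\bar Y_j)\bigr)$. Here $\chi$ and the pure-CMSO parts are handled by types, and each block $\forall\bar Y_j\,\phi_j$ contains at most one exceptional comparison $t_j(\bar X,\bar Y_j)\lessgtr s_j(\bar X)$. Existential quantifiers, conjunctions and disjunctions can be pushed outside, at the cost of a blow-up depending only on $|\psi|$.

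\textbf{Rewriting each block as an optimisation.} Fix $\bar X$. Write $\phi_j$ as a Boolean combination, in disjunctive normal form over CMSO types, of pure-CMSO conditions and the single comparison $t_j\leq s_j$. Then $\forall\bar Y_j\,\phi_j$ becomes the following requirement. For every CMSO type $\tau$ of the pair $(\bar X,\bar Y_j)$ for which $\phi_j$ needs the comparison, we need
\[\max\{t_j(\bar X,\bar Y_j): (\bar X,\bar Y_j)\text{ has type }\tau\}\leq s_j(\bar X),\]
or the analogous minimum for the reverse comparison. Nonnegativity of the coefficients makes the terms monotone sums over elements, so these extremal values decompose additively over the tree decomposition.

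\textbf{The dynamic program.} For each node $t$ of the decomposition, a state consists of:
\begin{itemize}
\item the bag-restricted assignment of $\bar X$;
\item the partial CMSO type of $\bar X$ below $t$;
\item the partial values of all outer weight terms, each an integer at most $Mn\cdot|\psi|$;
\item for each block $j$ and each partial type $\tau$ of $(\bar X,\bar Y_j)$, the maximal (or minimal) partial value of $t_j$ over partial $\bar Y_j$ realising $\tau$.
\end{itemize}
The number of states is therefore $(M+n)^{h(|\psi|,w)}$. Introduce, forget and join nodes update the types by the usual automaton or Feferman--Vaught composition. They add weights, and they combine the extremal tables by taking $\max$ of sums over compatible type pairs. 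At the root, accept exactly the states whose completed type satisfies $\chi$ and where every table entry satisfies its comparison against the recorded $s_j(\bar X)$. A satisfying assignment of the free variables is then recovered by standard backtracking through the tables.

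\textbf{Main obstacle.} The delicate step is the correctness of the block tables. The universal variables $\bar Y_j$ are chosen independently in different subtrees, and their types interact with $\bar X$ only through the bag. I would prove that the maximum over $\bar Y_j$ of a given global type equals the maximum over combinations of compatible subtree types of the sum of subtree maxima. This hinges on the exceptional term occurring on only one side of a single comparison. If $t_j$ appeared on both sides, or several $\bar Y$-dependent terms had to be compared, a single extremal value per type would no longer suffice. That restriction is exactly what the Blocked fragment imposes.
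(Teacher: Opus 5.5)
The paper does not prove this statement. It imports it verbatim from Dreier, Ganian and Hamm and uses it only as a black box in Theorems~\ref{thm:xp} and~\ref{thm:walk-xp}, so your reconstruction is a different, self-contained route, and as a sketch it is sound. The exceptional term depends only on $\bar X\bar Y_j$, so each block reduces, type by type, to one extremal value of a linear functional of $\bar Y_j$. These values compose over a bag, because the choices of $\bar Y_j$ on the two sides are independent once the bag restriction (recorded in the boundaried type) is fixed. The tables have a number of entries bounded in terms of $|\psi|$ and $w$, each entry taking polynomially many values in $M+n$, which gives $(M+n)^{h(|\psi|,w)}$ states. Citing buys the full theorem with all technicalities discharged, together with DGH's approximate evaluation. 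Your route buys an explanation of why the fragment is XP.

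Two points need tightening. First, negations inside $\phi_j$ can require the comparison to hold for some types and to fail for others. You must therefore store both the maximum and the minimum per type, not one of them. Second, nonnegativity plays no role in your dynamic program, since linearity alone gives additivity over vertices.

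In fact your DP proves more than your last paragraph claims. A comparison with $\bar Y_j$-dependent terms on both sides can be rewritten as $t_j-t'_j\leq 0$ with signed coefficients and handled by the same tables. So for exact evaluation the only essential restriction is a single $\bar Y_j$-dependent comparison per block. The one-sided, nonnegative form is a requirement of DGH's fragment. It is the reason the paper rewrites $|R|$ as $|F|-|S|$ in the proof of Lemma~\ref{lem:blocked-path}, a trick your DP would not need.
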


We apply \cref{thm:metatheorem} to obtain the following.

\begin{theorem}\label{thm:xp}
For every fixed rationals $\alpha\geq1$ and $\beta\geq0$, there is a computable function
$f_{\alpha,\beta}$ such that the following
holds.  Given a simple graph $G$ with $n>0$ vertices and an
integer $k\geq1$, one can solve \textsc{$(\alpha,\beta)$-Quasi-Shortest Path Cover}
in XP time $n^{f_{\alpha,\beta}(k)}$.
\end{theorem}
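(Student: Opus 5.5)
The algorithm first tests treewidth and then evaluates the query of \cref{lem:blocked-cover} with \cref{thm:metatheorem}.

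\textbf{Preprocessing.} Write $a=2\alpha+\beta+1$ and $w_k=49a^2k^2+30ak$. We may assume $G$ is connected. If it is not, a single covering path or walk lies inside one component, so we run the test below for every component $C$ and every budget $k_C\le k$, and check whether some choice with $\sum_C k_C\le k$ succeeds. This costs only a polynomial factor. Alternatively, the formula could be applied directly, since it never requires connectivity of $G$.

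\textbf{Treewidth test.} We decide whether $\tw(G)\le w_k$ using an exact treewidth algorithm, for instance Bodlaender's algorithm in time $2^{O(w_k^3)}n$, or a simple $n^{O(w_k)}$ procedure. If $\tw(G)>w_k$, then by \cref{thm:treewidth} no cover of $V(G)$ by at most $k$ $(\alpha,\beta)$-quasi-shortest paths exists, and we answer no. Using fewer paths is covered as well, because the bound of \cref{thm:treewidth} is monotone in $k$.

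\textbf{Query evaluation.} Otherwise $\tw(G)\le w_k=O_{\alpha,\beta}(k^2)$. We build the $\forall\mathrm{CMSO}_2$ query $\widehat\psi_{k,\alpha,\beta}$ from \cref{lem:blocked-cover} and pair it with the constant-zero target, which turns it into a feasibility query. Its size is $O(k)$ for fixed $\alpha,\beta$. All weights are unit weights, and the largest coefficient is a constant depending only on $\alpha$ and $\beta$, namely $\max\{sq,sp,rq\}$ from the formula of \cref{lem:blocked-path}. Hence $M=O_{\alpha,\beta}(1)$. Applying \cref{thm:metatheorem} gives running time
\[
(M+n)^{h(O(k),w_k)}=n^{f_{\alpha,\beta}(k)}
\]
for a computable $f_{\alpha,\beta}$; here we use $n\ge1$ and absorb the constant $M$ into the exponent when $n\ge2$. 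The meta-theorem also returns the free variables $P_1,\dots,P_k$ and $X_1,\dots,X_k$ when the query is satisfiable. By \cref{lem:blocked-cover} these encode a valid cover, which we output.

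\textbf{Correctness.} The answer is correct in both directions. If a cover exists, the treewidth test passes and the query is satisfiable. Conversely, every satisfying assignment is a cover.

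\textbf{Main obstacle.} The delicate point is confirming that the hypotheses of \cref{thm:metatheorem} really hold: the formula must genuinely lie in $\forall\mathrm{CMSO}_2$ with natural coefficients, which is ensured by \cref{lem:blocked-path,lem:blocked-cover}. Once that is settled, the rest is bookkeeping. The running time is XP rather than FPT because the exponent $h$ depends on the width and on the formula.
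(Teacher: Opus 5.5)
Your proposal is correct and follows the paper's proof: reject via \cref{thm:treewidth} when the treewidth exceeds $49a^2k^2+30ak$, and otherwise evaluate the constant-target feasibility query $\widehat\psi_{k,\alpha,\beta}$ of \cref{lem:blocked-cover} with \cref{thm:metatheorem}, taking $M=\max\{sq,sp,rq\}$. The differences are cosmetic: the paper uses Korhonen's 2-approximation, which gives a decomposition of width $2w+1$, rather than an exact treewidth algorithm, and your connectivity preprocessing is not needed, as you already note.
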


\begin{proof}
Let $a=2\alpha+\beta+1$ and 
 $w=49a^2k^2+30ak$. 
Apply the single-exponential 2-approximation for treewidth due to
Korhonen~\cite{Korhonen2022}.  It either correctly reports
$\tw(G)>w$, in which case we answer no due to \cref{thm:treewidth}, or returns a tree decomposition
of width at most $w'=2w+1$.  The standard nice tree decomposition conversion transforms it into a nice tree decomposition of at most $4n$ nodes and width at most $w'$ in time $O(wn)$~\cite[Lemma 6]{bodlaender2013fine}.

By \cref{lem:blocked-cover}, the formula
$\widehat\psi_{k,\alpha,\beta}$ has size $O(k)$ and its assignments are
exactly the required quasi-shortest path covers.  Evaluate the corresponding feasibility query
exactly on the width-$w'$ decomposition using \cref{thm:metatheorem}.  If it is infeasible, answer no.  Otherwise, return the paths
encoded by the free sets $(X_i,P_i)$.

With $\alpha=\frac p q$ and $\beta= \frac r s$, the weight bound from \cref{thm:metatheorem} is the constant
$M=\max\{sq,sp,rq\}$.  Exact evaluation therefore takes time
$\bigl(M+(w+1)n\bigr)^{h(O(k),2w+1)}$.

The treewidth approximation takes $2^{O(w)}n$ time, the nice tree decomposition conversion takes
$O(wn)$ time, and writing the $k$ paths takes $O(kn)$ time.  Since
$w=O_{\alpha,\beta}(k^2)$ by \cref{thm:treewidth} and $M$ is constant for fixed
$\alpha,\beta$, these costs have the claimed form, by aggregating all
parameter-only factors into the exponent
$f_{\alpha,\beta}$.
\end{proof}

We next prove that the XP algorithm from \Cref{thm:xp} can be extended to walks, by a reduction to a slightly constrained version of \textsc{$(\alpha,\beta)$-Quasi-Shortest Path Cover}.

\begin{theorem}\label{thm:walk-xp}
For every fixed rationals $\alpha\geq1$ and 
$\beta\geq0$, there is a computable function
$f_{\alpha,\beta}$ such that the following
holds.  Given a simple graph $G$ with $n>0$ vertices and an
integer $k\geq1$, one can solve \textsc{$(\alpha,\beta)$-Quasi-Shortest Walk Cover} in XP time
$n^{f_{\alpha,\beta}(k)}$.
\end{theorem}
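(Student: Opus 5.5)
The plan is to follow the same pipeline as \cref{thm:xp}, but to encode walks as paths in an auxiliary graph. The paper hints at "a reduction to a slightly constrained version" of the path problem. A quasi-shortest walk of length $t$ visits at most $t+1$ occurrences, and it can revisit vertices only within bounded windows: if $w_a=w_b$ with $a<b$, then $b-a\leq \alpha\cdot 0+\beta=\beta$. So every vertex occurs only within a window of at most $\beta+1$ consecutive positions.

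First, I will check the treewidth. By \cref{thm:walk-treewidth}, if a cover by $k$ quasi-shortest walks exists, then $\tw(G)\leq 49a^2k^2+30ak$. Korhonen's approximation therefore either rejects or returns a decomposition of width $O_{\alpha,\beta}(k^2)$, exactly as in \cref{thm:xp}.

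Second, I will build the auxiliary graph $G'=G\boxtimes C$, where $C$ is a path on $L=\lfloor\beta\rfloor+1$ vertices, or more simply the lexicographic product $G[K_L]$. Each vertex $v$ becomes a clique of copies $(v,1),\dots,(v,L)$, and $(u,i)(v,j)$ is an edge whenever $uv\in E(G)$ or $u=v$. Its treewidth is at most $L(\tw(G)+1)-1$, still $O_{\alpha,\beta}(k^2)$.

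Consider a walk $W$ in $G$ in which each vertex occurs at most $L$ times. Assign the $j$-th occurrence of $v$ to the copy $(v,j)$; this turns $W$ into a simple path $W'$ in $G'$. I must rule out consecutive occurrences $w_i=w_{i+1}$, but these cannot happen in a simple graph. Conversely, a path in $G'$ whose consecutive vertices project to adjacent vertices of $G$ projects to a walk in $G$. Define $\dist_G$-weights by projecting. The condition to enforce on a path $P'$ in $G'$ is then:
\begin{itemize}
\item every edge of $P'$ projects to an edge of $G$, not to a clique edge, which is an MSO$_2$ condition if we label the clique edges by a unary weight or colour;
\item for every subpath $Q$ of $P'$ with endpoints $x',y'$ projecting to $x,y$, and every $x$--$y$ path $R$ in $G$, we have $sq|Q|\leq sp|R|+rq$.
\end{itemize}

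The main obstacle is expressing $R$ inside $G'$ or $G$. I would instead run the meta-theorem on a structure containing both $G$ and $G'$ as a disjoint union, with MSO-definable projection edges $(v,j)\to v$. This keeps bounded treewidth by adding each $v$ to the bags of its copies. $R$ then ranges over $x$--$y$ paths in the $G$-part, and I reuse the complement trick of \cref{lem:blocked-path}: put weight $1$ on the edges of the $G$-part and $0$ elsewhere, let $F$ be the set of $G$-part edges, and set $S=F\setminus R$. This gives the single comparison $sq\,w(Q)+sp\,w(S)\leq sp\,w(F)+rq$, with the weight on $Q$ counting only $G'$ edges. This is a legal block formula.

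Covering becomes: every $v\in V(G)$ has some copy lying on one of the $k$ encoded paths. This is the formula $\widehat\psi$ of \cref{lem:blocked-cover} adapted accordingly, so it has size $O(k)$. Applying \cref{thm:metatheorem} with constant $M$ yields time $n^{f_{\alpha,\beta}(k)}$. A satisfying assignment is then projected back to the $k$ walks in $G$.
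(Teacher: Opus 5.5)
Your proposal is correct and shares the paper's skeleton: the treewidth bound of \cref{thm:walk-treewidth} with Korhonen's approximation, the blow-up of every vertex into a clique of size $\lfloor\beta\rfloor+1$ (the paper's $H$ is exactly your $G[K_L]$), simple paths that avoid the marked clique edges, a covering condition, and \cref{thm:metatheorem}.

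The difference is in how you express quasi-shortness. The paper evaluates the query on $H$ alone and reuses \cref{lem:blocked-path} unchanged, so distances are measured in $H$. Since $\dist_H$ agrees with $\dist_G$ of the projections only when the endpoints lie in different cliques, and equals $1$ inside a clique, the paper adds a plain MSO condition: no subpath with at least $L$ edges may start and end in the same clique. You instead glue $G$ to $G'$ by projection edges and let $R$ range over paths of the $G$-part, with a second weight symbol. Your block then states the $\dist_G$ condition literally and needs no extra condition. The cost is a two-sorted labelled structure and a modified version of \cref{lem:blocked-path} rather than a black-box reuse; the treewidth remains $O(Lw)$, as you note.

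Make one point explicit. When the endpoints of $Q$ are copies of the same vertex $x$, the admissible $R$ is the length-zero path $R=\emptyset$. This gives $S=F$ and hence $|Q|\leq\beta$. This is precisely the constraint that bounds revisits, and it is what the paper's extra condition does. If you reuse the paper's $\operatorname{RootedPath}$, which requires $x\neq y$, this case becomes vacuous. For example, take $(\alpha,\beta)=(2,1)$ and $K_{1,3}$ with center $c$. The walk $a,c,b,c,d$ would wrongly be accepted as a single cover, although its subwalk $c,b,c$ violates quasi-shortness.

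Also drop the aside about $G\boxtimes C$ with $C$ a path. There, $(u,1)$ and $(v,3)$ are non-adjacent even when $uv\in E(G)$, so the occurrence-to-copy assignment can fail. Only $C=K_L$, i.e.\ the lexicographic product, works.
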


\begin{proof}
Notice that, if $\beta<1$, any $(\alpha,\beta)$-quasi-shortest walk is actually a simple path, because repeating a vertex (which is at distance~0 from itself) would violate the $(\alpha,\beta)$-quasi-shortness condition. Hence, in that case, the statement follows from \Cref{thm:xp}.

Assume next $\beta\geq 1$ and let $b=\lfloor \beta\rfloor+1$. Construct a graph $H$ from $G$ by replacing every vertex $u$ by a
clique $C_u$ of size $b$ and 
adding all possible edges connecting a vertex of $C_u$ to a vertex of $C_v$,
whenever $uv\in E(G)$. Let $\mathcal C=\{C_u\mid u\in V(G)\}$. Color all the edges inside replacement cliques.
We will use a slightly modified version of the formulas in the proof of Theorem~\ref{thm:xp}. To this end, we show that $G$ has a vertex-covering by $k$ $(\alpha,\beta)$-quasi-shortest walks if and only if $H$ has a set $\mathcal P=\{P_1,\ldots,P_k\}$ of at most $k$
$(\alpha,\beta)$-quasi-shortest (simple) paths satisfying the following three conditions:

\begin{enumerate}
    \item none of the paths in $\mathcal P$ uses a colored edge;
    \item no subpath of any path in $\mathcal P$ with at least $b$ edges has both endpoints in the same clique of $\mathcal C$;
    \item every clique of $\mathcal C$ is intersected by some path in $\mathcal P$.
\end{enumerate}

We now show that $\mathcal P$ indeed encodes exactly the desired walk covers of $G$. Consider a set of $(\alpha,\beta)$-quasi-shortest walks of $G$ covering $V(G)$. We may assume without loss of generality, that the same vertex does not occur consecutively in the walk, as otherwise, we may remove one of these occurrences. Notice that in an $(\alpha,\beta)$-quasi-shortest walk $W$ of $G$, the first and last occurrences of any vertex are fewer than $b$ steps apart, so each vertex occurs at most $b$ times. Thus, replacing $W$ by a path $P$ of $H$ by assigning distinct vertices of the same clique $C_u$ of $\mathcal C$ to the occurrences of $u$ in $W$ turns $W$ into a simple path in~$H$. For the same reason, no subpath of such a path with at least $b$ edges has both endpoints in the same clique of $\mathcal C$, so (2) is satisfied. Since $W$ did not contain two consecutive occurrences of the same vertex, (1) is satisfied. Every clique of $\mathcal C$ is intersected since the walks cover all vertices of $G$. Finally, for vertices $x\in C_u$, $y\in C_v$ with $u\ne v$, we have $\dist_H(x,y)=\dist_G(u,v)$, so, a subpath between $x$ and $y$ is $(\alpha,\beta)$-quasi-shortest since the walks are as well. If $x,y$ are in the same clique $C_v$, $\dist_H(x,y)=1$, and as we proved (2), it has at most $b-1\leq\beta\leq\alpha\cdot \dist_H(x,y)+\beta$ edges, so it is also $(\alpha,\beta)$-quasi-shortest.

Conversely, let $\mathcal P$ be a set of $(\alpha,\beta)$-quasi-shortest (simple) paths of $H$ satisfying all three conditions. In each path, we replace each vertex of say $C_v$ by $v$, to obtain a collection of walks of $G$ of the same size as $\mathcal{P}$. Condition (2) guarantees that a subpath with at least $b$ edges in any path of $\mathcal P$ is projected into a subwalk in $G$ with distinct endpoints, hence subwalks with the same starting and ending vertex are $(\alpha,\beta)$-quasi-shortest. For vertices $x\in C_u$, $y\in C_v$ with $u\ne v$, we have $\dist_H(x,y)=\dist_G(u,v)$, so, subwalks with distinct endpoints are $(\alpha,\beta)$-quasi-shortest since the paths of $\mathcal P$ do as well.
Since every clique in $\mathcal C$ is intersected by a path of $\mathcal P$, every vertex of $G$ is covered by one of the walks.

For fixed $\beta$, the three additional conditions are
ordinary MSO$_2$ predicates.  Combine them with the path
formulas of \cref{lem:blocked-path}, treating empty and
singleton paths as in \cref{lem:blocked-cover}.  This gives
a $\forall\mathrm{CMSO}_2$ query of size $O(k)$ on the colored
graph~$H$. As in the proof of \Cref{thm:xp}, by \cref{thm:walk-treewidth} 
and running
Korhonen's treewidth 2-approximation~\cite{Korhonen2022}, we may reject if the treewidth is
too large for a yes-instance, or obtain
a tree decomposition of $G$ of width $w=O_{\alpha,\beta}(k^2)$.
Replacing each bag $B$ by $\bigcup_{u\in B}C_u$ gives a decomposition
of $H$ of width $O(b w)$.  Moreover, $|V(H)|=b n$.
Adapting the proof of \Cref{thm:xp}, Exact Blocked-CMSO evaluation
therefore has the stated XP running time. 
\end{proof}

\section{Conclusion}\label{sec:discussion}

For graphs $G$ covered by $k$ exact shortest paths, our results establish that $\tw(G)= O(k^2)$ and $\pw(G)
 = O(k^3)$. 
For covering the edges, the treewidth upper bound remains, and we obtained $\pw(G) = O(k^2\log k)$.
For all these cases, we provide an $\Omega(k\log k)$ lower bound for infinitely many graphs, showing that a linear upper bound is impossible (negatively answering a question from Baste et al.~\cite{BasteDMGOP2026}).

The polynomial pathwidth upper bound for the vertex-covering case answers the most intriguing formerly open question in the area, since previously, only an exponential upper bound was known. The other bounds improve those from the literature.

 The gaps between the lower and upper bounds come from the quadratic treewidth upper bound, and the additional cost from the pathwidth-modulator and the well-colored path partition, respectively. Note that improving the $\Omega(k\log k)$ lower bound in treewidth would also improve the treewidth lower bound with respect to linear growth from~\cite[Theorem 8]{CampbellDGGHHMW2023}.

The main open question is to obtain tight bounds for the above problems.

\begin{openproblem}
What are the maximum treewidth and pathwidth of a graph covered by $k$ shortest paths? Is $O(k^2)$ optimal, or even $O(k\log k)$? Do the optimal bounds for vertices and edges differ? What about $H$-minor-free graphs?
\end{openproblem}

We introduced the problems \textsc{$(\alpha,\beta)$-Quasi-Shortest Path/Walk Cover}, which can arguably better model real-world applications. Vertex-covering by quasi-shortest paths/walks also implies quadratic treewidth, enabling XP recognition and construction algorithms. The use of the Blocked-CMSO-based meta-theorem yields XP bounds whose exponents depend on the formula and treewidth, but explicit exponents and FPT algorithms remain open. 

\begin{openproblem}
For fixed rationals $\alpha\geq1$ and $\beta\geq0$, are
\textsc{$(\alpha,\beta)$-Quasi-Shortest Path Cover} and
\textsc{$(\alpha,\beta)$-Quasi-Shortest Walk Cover} FPT
when parameterized by the treewidth, or by the number $k$ of solution paths?
\end{openproblem}

Note that this is already open for \textsc{Shortest Path Cover} ($\alpha=1$ and $\beta=0$)~\cite{DumasFPT2024,FernauFMMPR2025}. 
The related \textsc{Shortest Path Partition} problem is W[1]-hard for pathwidth~\cite{ChakrabortyDFMT2026} (but the solution size parameterization remains open). The edge-disjoint version of \textsc{Shortest Path Cover} for oriented graphs is W[1]-hard for solution size~\cite{FernauFMMPR2025}.

\paragraph*{Disclosure of generative-AI assistance.}
We used generative AI (GPT 5 and 6) for initial language drafting, literature search and review of our candidate proofs. In particular, it assisted with examining the applicability of cited results, suggesting alternative arguments, and checking inequalities and boundary cases. These were treated as suggestions and carefully verified by us. The core ideas, proofs and final manuscript are due to the authors: we entirely reviewed, rewrote and verified proofs, citations, final text and retain full responsibility for the content.

\bibliography{refs}
\end{document}